\theoremstyle{plain}
\newtheorem{theorem}{Theorem}
\newtheorem{proposition}[theorem]{Proposition}
\newtheorem{corollary}[theorem]{Corollary}
\newtheorem{lemma}[theorem]{Lemma}
\theoremstyle{definition}
\newtheorem{definition}[theorem]{Definition}
\newtheorem{example}[theorem]{Example}
\newtheorem{remark}[theorem]{Remark}
\newcommand{\ff }{\mathbb{F}}
\newcommand{\fq }{\mathbb{F}_q}
\newcommand{\fqm }{\mathbb{F}_{q^m}}
\newcommand{\ba }{{\bf a}}
\newcommand{\bb }{{\bf b}}
\newcommand{\bc }{{\bf c}}
\newcommand{\bd }{{\bf d}}
\newcommand{\bg }{{\bf g}}
\newcommand{\bh }{{\bf h}}
\newcommand{\bx }{{\bf x}}
\newcommand{\by }{{\bf y}}
\newcommand{\wt}{\mathrm{wt}}
\newcommand{\supp}{\mathrm{supp}}
\newcommand{\Rsupp}{\mathrm{Rsupp}}
\newcommand{\rk}{\mathrm{rk}}
\newcommand{\Tr}{\mathrm{Tr}}
\title{On defining generalized rank weights}
\author{Relinde Jurrius\thanks{\texttt{relinde.jurrius@unine.ch}}
\and Ruud Pellikaan\thanks{\texttt{g.r.pellikaan@tue.nl}}}
\begin{document}

\maketitle

\begin{abstract}
This paper investigates the generalized rank weights, with a definition implied by the study of the generalized rank weight enumerator. We study rank metric codes over $L$, where $L$ is a finite extension of a field $K$. This is a generalization of the case where $K=\fq $ and $L=\fqm$ of Gabidulin codes to arbitrary characteristic. We show equivalence to previous definitions, in particular the ones by Kurihara-Matsumoto-Uyematsu \cite{kurihara:2012,kurihara:2013},  Oggier-Sboui \cite{oggier:2012} and Ducoat \cite{ducoat:2013}. As an application of the notion of generalized rank weights, we discuss codes that are degenerate with respect to the rank metric.
\end{abstract}

\section{Introduction}

Error-correcting codes with the rank distance were introduced by Gabidulin \cite{gabidulin:1985}. Recently they have gained a lot of interest because of their application to network coding. In network coding, messages are not transmitted over a single channel, but over a network of channels. This application induced a lot of theoretical research to rank metric codes. \\
Many notions in the theory for codes with the Hamming metric have an equivalent notion for codes with the rank metric. We studied the rank-metric equivalent of the weight enumerator and several generalizations of it \cite{jurrius:2014}. From this theory, a definition of the generalized rank weights follows. These are the rank metric equivalence of the generalized Hamming weights. Definitions of the generalized rank weights were already proposed: we show here that the definition that follows from our work leads to the same values as all of the proposed definitions. In particular, this means all previously proposed definitions are equivalent for rank metric codes over finite fields.  \\
This paper investigates the generalized rank weights of a code over $L$, where $L$ is a finite Galois extension of a field $K$. This is a generalization of the case where $K=\fq $ and $L=\fqm$ of Gabidulin codes \cite{gabidulin:1985} to arbitrary characteristic as considered by Augot-Loidreau-Robert \cite{augot:2013,augot:2014}. \\
As a small application of the generalized rank weights, we discuss the concept of degenerate codes.

\section{Rank metric codes and weights}

Let $K$ be a field and let $L$ be a finite Galois extension of $K$. A \emph{rank metric code} is an $L$-linear subspace of $L^n$. To all codewords we associate a matrix as follows. Choose a basis $B=\{ \alpha_1, \ldots ,\alpha _m \} $ of $L$ as a vector space over $K$. Let $\bc =(c_1, \ldots ,c_n)\in L^n$. The $m \times n$ matrix $M_B(\bc)$ is associated to $\bc $ where the $j$-the column of $M_B(\bc )$ consists of the coordinates of $c_j$  with respect to the chosen basis: $c_j = \sum_{i=1}^m c_{ij}\alpha_i$. So $M_B(\bc )$ has entries $c_{ij}$. \\
The $K$-linear row space in $K^n$ and the rank of $M_B(\bc )$ do not depend on the choice of the basis $B$,
since for another basis $B'$ there exists an invertible matrix $A$ such that $M_B(\bc)= AM_{B'}(\bc)$.
The rank weight $\wt_R(\bc )=\rk(\bc)$ of $\bc $ is by definition the rank of the matrix $M_B(\bc )$,
or equivalently the dimension over $K$ of the row space of $M_B(\bc )$. This definition follows from the rank distance, that is defined by $d_R(\bx, \by ) = \rk(\bx-\by)$. The rank distance is in fact a metric on the collection of all $m \times n$ matrices, see \cite{gabidulin:1985,augot:2013}.

The following is from \cite[Definition 1]{jurrius:2014}.
\begin{definition}
Let $C$ be an $L$-linear code.
Let $\bc \in C$. Then $\Rsupp(\bc )$, the {\em rank support} of $\bc$
is the $K$-linear row space of $M_B(\bc )$. So $\wt_R(\bc)$ is the dimension of $\Rsupp(\bc)$.  Let $D$ be an $L$-linear subcode of $C$.
Then $\Rsupp(D)$, the {\em rank support} of $D$ is the $K$-linear space generated by the $\Rsupp(\bd )$ for all $\bd \in D$.
Then $\wt_R(D)$, the {\em rank support weight} of $D$ is the dimension of $\Rsupp(D)$.
\end{definition}

Note that this definition is the rank metric case of the support weights, or weights of subcodes,
of codes over the Hamming metric.
An \emph{isometry} with respect to the rank metric is an automorphism of $L^n$ that preserves the rank distance.
The following theorem characterizes when we call two rank metric codes ``equivalent'':

\begin{theorem}
The group $\mbox{Iso}(L^n)$ of isometries of $L^n$ with respect to the rank distance
is generated by the scalar multiplications $\lambda :L^n \rightarrow L^n$ with $\lambda$ in $L^*$
given by $\lambda \bx = (\lambda x_1, \ldots , \lambda x_n)$,
and the general linear group $\mbox{GL}(n; K)$ of invertible $n\times n$ matrices with entries in $K$.
The group $\mbox{Iso}(L^n)$  is isomorphic to the product group $L^*/K^* \times \mbox{GL}(n; K)$
\end{theorem}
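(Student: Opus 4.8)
The plan is to use the $L$-linear description of isometries throughout. Since an isometry is an $L$-linear automorphism of $L^n$, I write it (in row-vector convention) as $\Phi(\mathbf{x})=\mathbf{x}G$ for a unique $G\in\mbox{GL}(n;L)$, and I use the reformulation, implicit in the discussion of $M_B(\mathbf{c})$, that $\wt_R(\mathbf{c})=\dim_K\langle c_1,\dots,c_n\rangle_K$ is the dimension of the $K$-span of the entries of $\mathbf{c}$. First I would record that the two listed families are isometries: multiplication by $\lambda\in L^*$ scales every entry by $\lambda$ and so preserves the $K$-dimension of the entry-span, while right multiplication by $A\in\mbox{GL}(n;K)$ replaces the entries by fixed, $K$-invertible, $K$-linear combinations of themselves and hence preserves the span itself. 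This is exactly the row- and column-invariance already noted for $M_B(\mathbf{c})$.

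The heart of the argument is the converse, that every isometry has the form $\mathbf{x}\mapsto\lambda\mathbf{x}A$, and this is the step I expect to be the main obstacle. Each standard basis vector $\mathbf{e}_i$ has rank $1$, so its image, the $i$-th row of $G$, has rank $1$; thus all entries of that row lie in a one-dimensional $K$-subspace $K\mu_i\subseteq L$, and I write row $i$ as $\mu_i\mathbf{a}_i$ with $\mu_i\in L^*$ and $\mathbf{a}_i\in K^n$. Next I feed in the rank-$1$ vectors $\mathbf{e}_i+\mathbf{e}_j$ for $i\neq j$: their images $\mu_i\mathbf{a}_i+\mu_j\mathbf{a}_j$ must again have rank $1$. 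The crucial dichotomy is whether $\mu_i,\mu_j$ are $K$-linearly dependent. If they were independent, then, writing every entry as a $K$-multiple of a single $\delta$ and matching coordinates against the $K$-independent pair $\mu_i,\mu_j$, one is forced to conclude that $\mathbf{a}_i$ and $\mathbf{a}_j$ are $K$-proportional; but then rows $i$ and $j$ of $G$ are proportional over $L$, contradicting $G\in\mbox{GL}(n;L)$. It is precisely this use of the invertibility of $G$ that rules out the degenerate alignment. Hence $\mu_i$ and $\mu_j$ are $K$-proportional for every pair, so all the $\mu_i$ lie on a single line $K\lambda$; absorbing each scalar $\mu_i/\lambda\in K^*$ into $\mathbf{a}_i$ gives $G=\lambda A$ with $A\in\mathrm{Mat}_n(K)$, and invertibility of $G$ forces $A\in\mbox{GL}(n;K)$. (For $n=1$ the pairing step is vacuous and the conclusion is immediate.)

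With the form settled, the group structure is routine. Composition gives $(\lambda_1\mathbf{x}A_1)$ followed by $A_2,\lambda_2$ equal to $(\lambda_1\lambda_2)\mathbf{x}(A_1A_2)$, so $(\lambda,A)\mapsto[\mathbf{x}\mapsto\lambda\mathbf{x}A]$ is a surjective homomorphism $L^*\times\mbox{GL}(n;K)\to\mbox{Iso}(L^n)$. I would then identify its kernel: $\lambda\mathbf{x}A=\mathbf{x}$ for all $\mathbf{x}$ forces $\lambda A=I$, so $A=\lambda^{-1}I$ has entries in $K$, whence $\lambda\in K^*$; the kernel is therefore $\{(c,c^{-1}I):c\in K^*\}\cong K^*$. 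Equivalently, the scalar subgroup $\cong L^*$ and the subgroup $\mbox{GL}(n;K)$ commute and meet exactly in the scalar matrices $K^*I$. Removing this shared $K^*$ — that is, taking the scalars modulo their intersection $K^*$ with $\mbox{GL}(n;K)$ — yields the asserted decomposition $\mbox{Iso}(L^n)\cong L^*/K^*\times\mbox{GL}(n;K)$. The one point that needs care is pinning the overlap of the two generating families to be exactly $K^*$ and nothing more, which is immediate from $A=\lambda^{-1}I\in\mbox{GL}(n;K)\iff\lambda\in K^*$.
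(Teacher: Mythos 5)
Your argument for the first assertion --- that every ($L$-linear) rank isometry of $L^n$ has the form $\bx \mapsto \lambda \bx A$ with $\lambda \in L^*$ and $A \in \mbox{GL}(n;K)$ --- is correct and self-contained, which is more than the paper provides: the paper's entire proof is a citation of Berger \cite{berger:2002,berger:2003} plus the remark that the finite-field argument generalizes. Your mechanism (each row of $G$ is the image of a rank-one vector $\be_i$, hence of the form $\mu_i\ba_i$ with $\ba_i \in K^n$; testing against $\be_i+\be_j$ and invoking invertibility of $G$ forces all $\mu_i$ onto one $K$-line) is sound, including the dichotomy step: if $\mu_i,\mu_j$ were $K$-independent, writing the line containing the image as $K\delta$ with $\delta = s\mu_i+t\mu_j$ and comparing coefficients forces $\ba_i=s\bc$ and $\ba_j=t\bc$ for a single $\bc\in K^n$, making rows $i$ and $j$ of $G$ proportional over $L$.

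The genuine gap is in your final step. Your kernel computation correctly proves $\mbox{Iso}(L^n)\cong\bigl(L^*\times\mbox{GL}(n;K)\bigr)/N$ with $N=\{(c,c^{-1}I):c\in K^*\}$, but ``removing this shared $K^*$'' is not a legitimate operation: quotienting a direct product by an \emph{anti-diagonally} embedded central subgroup is not the same as quotienting one factor, and the resulting central product is in general not isomorphic to $L^*/K^*\times\mbox{GL}(n;K)$. Concretely, take $K=\ff_3$, $L=\ff_9$, $n\geq 2$. The scalar multiplications form the full center of $\mbox{Iso}(L^n)$ (an element $\bx\mapsto\lambda\bx A$ is central only if $A$ commutes with all of $\mbox{GL}(n;K)$, i.e. $A\in K^*I$), so $Z(\mbox{Iso}(L^n))\cong L^*\cong\mathbb{Z}/8\mathbb{Z}$, which is cyclic; whereas $Z\bigl(L^*/K^*\times\mbox{GL}(n;K)\bigr)\cong L^*/K^*\times K^*\cong\mathbb{Z}/4\mathbb{Z}\times\mathbb{Z}/2\mathbb{Z}$, which is not. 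Hence the two groups are not isomorphic. Your step is valid exactly when $K^*$ admits a complement in $L^*$ (then $L^*\cong K^*\times D$ with $D\cong L^*/K^*$ and the $K^*$ factor cancels against $N$); for $K=\fq$, $L=\fqm$ this is the condition $\gcd(m,q-1)=1$. So the honest endpoint of your (correct) computation is the central-product description $\bigl(L^*\times\mbox{GL}(n;K)\bigr)/K^*$; the direct-product claim cannot be reached by any repair of this step, because --- as your own computation in fact exposes --- the isomorphism asserted in the theorem's second sentence, taken over from the cited source, is false without such an additional hypothesis.
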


\begin{proof}
The proof in \cite{berger:2002} and \cite[Theorem 1]{berger:2003} in case of finite fields is easily generalized.
\end{proof}

We prove some basic properties of the rank weight and rank support.

\begin{proposition}\label{p-Rsupp}
Let $C$ be an $L$-linear code.\\
$(1)$ Let $\bx \in L^n$ and $\alpha \in L^*$. Then $\Rsupp(\alpha \bx)= \Rsupp(\bx)$.\\
$(2)$ Let $\bx, \by \in L^n$. Then $\Rsupp(\bx +\by) \subseteq  \Rsupp(\bx) + \Rsupp(\by)$.\\
$(3)$ Let $\bx, \by \in L^n$. Then $\wt_R(\bx+\by ) \leq \wt_R(\bx) + \wt_R(\by)$.\\
$(4)$ If $\bg_1, \ldots ,\bg_k$ generate $C$ as an $L$-linear space, then $\Rsupp(C)$
is the $K$-linear sum of the $\Rsupp(\bg_i)$, $i=1, \ldots ,k$.
\end{proposition}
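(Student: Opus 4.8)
The plan is to prove the four parts in order, since part (3) follows from part (2) and part (4) combines parts (1) and (2). For part (1), the key observation is that scalar multiplication by $\alpha \in L^*$ acts on coordinate vectors as the regular representation of $\alpha$. Fixing the basis $B$, multiplication by $\alpha$ is a $K$-linear endomorphism of $L$; let $M_\alpha$ denote its matrix with respect to $B$. Then the $j$-th column of $M_B(\alpha \bx)$, being the coordinate vector of $\alpha x_j$, equals $M_\alpha$ applied to the $j$-th column of $M_B(\bx)$, so $M_B(\alpha \bx) = M_\alpha M_B(\bx)$. Since $\alpha \neq 0$, multiplication by $\alpha$ is invertible, so $M_\alpha \in \mathrm{GL}(m; K)$. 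Left multiplication by an invertible matrix performs invertible row operations and hence preserves the $K$-linear row space, giving $\Rsupp(\alpha\bx) = \Rsupp(\bx)$.

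For part (2), I would use that the assignment $\bc \mapsto M_B(\bc)$ is additive, so $M_B(\bx + \by) = M_B(\bx) + M_B(\by)$. Each row of a matrix sum $A + B$ is the sum of the corresponding rows of $A$ and $B$, hence lies in $\mathrm{rowspace}(A) + \mathrm{rowspace}(B)$; taking the $K$-linear span yields $\Rsupp(\bx+\by) \subseteq \Rsupp(\bx) + \Rsupp(\by)$. Part (3) is then immediate by taking dimensions: monotonicity of dimension under the inclusion from part (2), together with the standard bound $\dim(U+W) \le \dim U + \dim W$, gives $\wt_R(\bx+\by) \le \wt_R(\bx) + \wt_R(\by)$.

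For part (4), one inclusion is immediate from the definition: each $\bg_i \in C$, so $\Rsupp(\bg_i) \subseteq \Rsupp(C)$, whence $\sum_i \Rsupp(\bg_i) \subseteq \Rsupp(C)$. For the reverse inclusion I would take an arbitrary $\bc \in C$ and write $\bc = \sum_{i=1}^k \lambda_i \bg_i$ with $\lambda_i \in L$. Applying part (2) inductively gives $\Rsupp(\bc) \subseteq \sum_i \Rsupp(\lambda_i \bg_i)$, and then part (1) gives $\Rsupp(\lambda_i \bg_i) = \Rsupp(\bg_i)$ whenever $\lambda_i \neq 0$ (the zero terms contribute only $\{0\}$). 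Thus $\Rsupp(\bc) \subseteq \sum_i \Rsupp(\bg_i)$ for every $\bc \in C$, so $\Rsupp(C) \subseteq \sum_i \Rsupp(\bg_i)$ and the two spaces coincide.

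The only genuinely substantive step is part (1); I expect the main obstacle there to be recognizing that the effect of scalar multiplication on the matrix $M_B(\bx)$ is left multiplication by the invertible matrix $M_\alpha$, rather than some operation that could alter the row space. Once this regular-representation viewpoint is in place, the remaining parts are routine linear algebra, and part (4) falls out as a clean combination of parts (1) and (2).
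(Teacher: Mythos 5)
Your proposal is correct and takes essentially the same approach as the paper: parts (2)--(4) are argued identically (additivity of $\bc \mapsto M_B(\bc)$, dimension count, and induction via (1) and (2) for the two inclusions in (4)). The only cosmetic difference is in part (1), where the paper passes to the shifted basis $B' = \{\alpha\alpha_1,\ldots,\alpha\alpha_m\}$ and uses $M_{B'}(\alpha\bx) = M_B(\bx)$ together with basis-independence of the row space, while you left-multiply by the regular-representation matrix $M_\alpha$ --- both arguments rest on the same fact that the $K$-linear row space is unchanged under left multiplication by an invertible matrix over $K$.
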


\begin{proof}
$(1)$  Let $B=\{ \alpha_1, \ldots ,\alpha _m \}$ be a basis of $L$ as a vector space over $K$.
Let $\alpha \in L^*$. Then $B'=\{ \alpha \alpha_1, \ldots ,\alpha \alpha _m \} $
is another basis of $L$ as a vector space over $K$.
Now $\Rsupp(\bx )$ is the row space of $M_B(\bx )$.
But this row space does not depend on the chosen basis.
Hence $\Rsupp(\alpha \bx )$ is the row space of $M_{B'}(\alpha \bx )$
and this row space is equal to $\Rsupp(\bx )$, since $M_{B'}(\alpha \bx ) = M_{B}(\bx )$.\\
$(2)$ The matrix $M_B(\bx )$ has entries $x_{ij}$ with $x_j = \sum_{i=1}^m x_{ij}\alpha_i$, and
$M_B(\by )$ has entries $y_{ij}$ with $y_j = \sum_{i=1}^m y_{ij}\alpha_i$.
So $M_B(\bx +\by )$ has entries $x_{ij}+y_{ij}$, since $x_j +y_j= \sum_{i=1}^m (x_{ij}+y_{ij})\alpha_i$.
Now  $\Rsupp(\bx)$,  $\Rsupp(\by)$ and $\Rsupp(\bx +\by)$ are equal to the row spaces of
$M_B(\bx)$, $M_B(\by)$ and $M_B(\bx +\by)$, respectively.
Therefore $\Rsupp(\bx +\by) \subseteq  \Rsupp(\bx) + \Rsupp(\by)$, since $M_B(\bx +\by )=M_B(\bx)+M_B(\by)$.\\
$(3)$ This is a direct consequence of $(2)$, since $\wt_R(\bx) =\dim \Rsupp(\bx)$
and $\dim (I+J) \leq \dim (I) + \dim (J)$ for subspaces $I$ and $J$ of $K^n$.\\
$(4)$  Let $\bg_1, \ldots ,\bg_k$ generate $C$ as an $L$-linear space.
Then $\Rsupp(C)$ contains  the $K$-linear sum of the $\Rsupp(\bg_i)$, by definition.
Conversely, let $\bc \in C$. Then $\bc = \sum _{i=1}^k \lambda _i \bg_i$ is in the $K$-linear sum
of the $\Rsupp(\bg_i)$, by applying $(1)$ and $(2)$ repeatedly.
Therefore $\Rsupp(C)$ is contained in   the $K$-linear sum of the $\Rsupp(\bg_i)$.
\end{proof}

\begin{definition}
Let $K=\fq$ and $L=\fqm$.
Let $C$ be an $L$-linear code. Then for every $r=0,\ldots,k$ the \emph{generalized rank weight enumerator} is defined by
\[ W_C^{R,r}(X,Y)=\sum_{w=0}^n A_w^{R,r}X^{n-w}Y^w, \]
where $A_w^{R,r}$ is the number of subcodes of $C$ of dimension $r$ and rank weight $w$.
This is well-defined, since $L^n$ is finite.
\end{definition}

Just like with the weight enumerator and the minimum distance, a special case of interest
is the first nonzero coefficient of these polynomials.

\begin{definition}\label{d-grwJP}
Let $C$ be an $L$-linear code.
Then $d_{R,r}(C)$, the $r$-th {\em generalized rank weight} of the code $C$ is
the minimal rank support weight of a subcode $D$ of $C$ of dimension $r$. That is:
\[ d_{R,r}(C)=\min_{\substack{D \subseteq C \\ \dim(D) =r}}\wt_R(D). \]
\end{definition}

The above is not the only proposed definition of the generalized rank weights. The first proposal of a definition of the $r$-th generalized rank weight was given by Kurihara-Matsumoto-Uyematsu \cite{kurihara:2012,kurihara:2013}.
An alternative was given by Oggier-Sboui \cite{oggier:2012} and Ducoat \cite{ducoat:2013}. Both definitions were motivated by applications. Before we discuss these definitions, we develop some more theory about rank metric codes.

\section{Some codes related to $C$}

With respect to the Hamming distance and a $k$-dimensional $\fq $-linear code $C$,
the support of $C$ is defined by $\supp(C)= \{j\ |\ c_j\not=0 \mbox{ for some } \bc \in C \}$.
The subcode $C(J)$ is defined in \cite{katsman:1987} and \cite[Definition 5.1]{jurrius:2013}
for a subset $J$ of $\{1, \ldots ,n\}$ with complement $J^c$ by:
$$
C(J)=\{ \ \bc \in C \ | \ \supp(\bc) \subseteq J^c \ \}.
$$
Define $C(j)=C(\{ j \})$ for $j \in \{1, \ldots ,n\}$. Let $J=\supp(C)$, then $C(j)$ has codimension $1$ in $C$ for all $j \in J$.
In fact, $C(j)$ is the code $C$ punctured in position $j$, but without the removing of the zeros in this position. \\

We claim that if $n<q$, then there exists a $\bc \in C$ such that $\supp(\bc ) =\supp(C)$.
Suppose $n<q$. Then $|J|<q$ and
$$
\bigcup_{j\in J} C(j) \not=C,
$$
since
$$
\left|\bigcup_{j\in J} C(j)\right| \leq |J| q^{k-1} < q^k = |C|.
$$
Let $\bc \in C\setminus \bigcup_{j\in J} C(j)$. Then $\supp (\bc ) \subseteq \supp(C) =J$. Furthermore $\bc \not\in C(j)$ for all $j\in J$.
So $c_j\not=0$ for all $j\in J$. Hence $J= \supp (\bc ) = \supp(C)$.\\

We will now translate this statement to rank metric codes. Notice that the statements ``$I\subseteq J^c$" and ``$I\cap J =\emptyset$" are equivalent for two subsets $I$ and $J$ of $\{1, \ldots ,n\}$.
These statements would translate into ``$I \subset J^\perp $" and ``$I\cap J =\{ 0 \}$", respectively, for subspaces $I$ and $J$ of $K^n$, but these are not equivalent.
For the definition of $C(J)$ in the context of the rank metric we give the following analogous definition as given in \cite[Definition 2]{jurrius:2014}.

\begin{definition}\label{dJ2}
Let $L$ be a finite field extension of the field $K$. Let $C$ be an $L$-linear code.
For a $K$-linear  subspace $J$ of $K ^n$ we define:
\[ C(J) = \{ \ \bc \in C \ | \ \Rsupp(\bc ) \subseteq J^\perp \ \} \]
\end{definition}

From this definition it is clear that $C(J)$ is a $K$-linear subspace of $C$, but in fact it is also an $L$-linear subspace.

\begin{lemma}\label{lJ2}
Let $C$ be an $L$-linear code of length $n$ and let $J$ be a $K$-linear  subspace of $K ^n$.
Then $\bc \in C(J)$ if and only if  $\bc \cdot \by =0 $ for all $ \by \in J $.
Furthermore $C(J)$ is an $L$-linear subspace of $C$.
\end{lemma}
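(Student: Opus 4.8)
The plan is to establish the inner-product characterization in the first claim and then obtain the $L$-linearity of the second claim essentially for free from it. First I would fix a basis $B=\{\alpha_1,\ldots,\alpha_m\}$ of $L$ over $K$ and write $\bc=(c_1,\ldots,c_n)$ with $c_j=\sum_{i=1}^m c_{ij}\alpha_i$, so that the rows $\br_1,\ldots,\br_m\in K^n$ of $M_B(\bc)$ are $\br_i=(c_{i1},\ldots,c_{in})$. For any $\by\in J\subseteq K^n$ I would expand
\[
\bc \cdot \by \;=\; \sum_{j=1}^n c_j y_j \;=\; \sum_{i=1}^m \Bigl( \sum_{j=1}^n c_{ij} y_j \Bigr)\alpha_i \;=\; \sum_{i=1}^m (\br_i \cdot \by)\,\alpha_i .
\]
Since $\by$ and each $\br_i$ have entries in $K$, every coefficient $\br_i\cdot\by$ lies in $K$, so by the $K$-linear independence of the $\alpha_i$ the equality $\bc\cdot\by=0$ holds if and only if $\br_i\cdot\by=0$ for every $i$.

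Next I would note that $\bc\cdot\by=0$ for all $\by\in J$ is therefore equivalent to $\br_i\cdot\by=0$ for all $i$ and all $\by\in J$, i.e.\ to every row $\br_i$ lying in $J^\perp$. Because $\Rsupp(\bc)$ is exactly the $K$-linear span of $\br_1,\ldots,\br_m$ and $J^\perp$ is a subspace, this is the same as $\Rsupp(\bc)\subseteq J^\perp$, which is the defining condition for $\bc\in C(J)$. This proves the first claim. For the second claim, the characterization just obtained exhibits $C(J)$ as the intersection of $C$ with $\bigcap_{\by\in J}\{\bc\in L^n\mid \bc\cdot\by=0\}$. Each map $\bc\mapsto\bc\cdot\by$ is $L$-linear on $L^n$, so its kernel is an $L$-subspace; the intersection of these kernels is an $L$-subspace of $L^n$, and intersecting with the $L$-linear code $C$ shows $C(J)$ is an $L$-linear subspace of $C$. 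Alternatively one can argue directly from Proposition~\ref{p-Rsupp}: part $(1)$ gives $\Rsupp(\lambda\bc)=\Rsupp(\bc)$ for $\lambda\in L^*$ and part $(2)$ gives $\Rsupp(\bc+\bd)\subseteq\Rsupp(\bc)+\Rsupp(\bd)$, so $C(J)$ is closed under $L$-scalar multiplication and addition.

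The only delicate point, and the step I expect to carry the real content, is the passage from the vanishing of the single $L$-combination $\sum_i(\br_i\cdot\by)\alpha_i$ to the simultaneous vanishing of each $K$-scalar $\br_i\cdot\by$. This is precisely where the $K$-rationality of $\by$ and of the rows $\br_i$, together with the fact that $B$ is a $K$-basis, is indispensable; without it the decoupling of the coordinates fails. Everything else is bookkeeping: identifying $\Rsupp(\bc)$ with the row span and recognizing the defining containment in $J^\perp$ as an orthogonality condition.
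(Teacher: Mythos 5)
Your proof is correct and follows essentially the same route as the paper's: both expand $\bc\cdot\by=\sum_{i=1}^m(\br_i\cdot\by)\alpha_i$ and use the $K$-linear independence of the basis elements (together with the fact that each $\br_i\cdot\by$ lies in $K$) to show that $\bc\cdot\by=0$ for all $\by\in J$ is equivalent to all rows of $M_B(\bc)$ lying in $J^\perp$, i.e.\ to $\Rsupp(\bc)\subseteq J^\perp$. The only difference is cosmetic: the paper presents this as a chain of equivalent statements and dismisses the $L$-linearity as straightforward, while you make that last step explicit by exhibiting $C(J)$ as an intersection of kernels of $L$-linear functionals.
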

\begin{proof}
The following statements are equivalent:
$$
\begin{array}{c}
\bc \in C(J)\\
\sum_{j=1}^nc_{ij}y_j =0 \mbox{ for all } \by \in J \mbox{ and } i=1, \ldots ,m\\
\sum_{i=1}^m(\sum_{j=1}^nc_{ij}y_j) \alpha_i=0 \mbox{  for all } \by \in J\\
\sum_{j=1}^n(\sum_{i=1}^mc_{ij} \alpha_i) y_j=0 \mbox{ for all } \by \in J\\
\sum_{j=1}^n c_j y_j=0 \mbox{ for all }\by \in J\\
\bc \cdot \by =0\mbox{ for all } \by \in J\\
\end{array}
$$
Hence $C(J) = \{  \bc \in C\ |\  \bc \cdot \by =0 \mbox{ for all } \by \in J  \} $.
From this description it follows straightforwardly that $C(J)$ is an $L$-linear subspace of $C$.
\end{proof}

\begin{corollary}\label{cJ2}
Let $C$ be an $L$-linear code of length $n$. Let $J$ be a $K$-linear  subspace of $K^n$.
Then  $\dim_L (C(J)) \geq \dim _L(C) - \dim _K(J)$.
\end{corollary}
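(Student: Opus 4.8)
The plan is to realize $C(J)$ as the kernel of an $L$-linear map into a low-dimensional space and then apply rank-nullity over $L$. The starting point is the reformulation from Lemma \ref{lJ2}, namely $C(J) = \{ \bc \in C \mid \bc \cdot \by = 0 \mbox{ for all } \by \in J \}$, which replaces the rank-support condition by an ordinary orthogonality condition.

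First I would reduce the infinitely many linear constraints to finitely many. Set $d = \dim_K(J)$ and choose a $K$-basis $\by_1, \ldots, \by_d$ of $J$. Since $\bc \cdot \by$ is $K$-linear in its second argument, the condition ``$\bc \cdot \by = 0$ for all $\by \in J$'' holds if and only if $\bc \cdot \by_i = 0$ for each $i = 1, \ldots, d$. Thus $C(J)$ is cut out inside $C$ by exactly $d$ linear equations.

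Next I would package these equations into a single map $\phi : C \to L^d$ defined by $\phi(\bc) = (\bc \cdot \by_1, \ldots, \bc \cdot \by_d)$. The key point is that $\phi$ is $L$-linear, not merely $K$-linear: each $\by_i$ has all its coordinates in $K \subseteq L$, so the functional $\bc \mapsto \bc \cdot \by_i = \sum_{j=1}^n c_j (y_i)_j$ scales the entries $c_j \in L$ by fixed scalars $(y_i)_j \in K \subseteq L$, hence respects $L$-scalar multiplication. By the lemma, $\ker(\phi) = C(J)$.

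Finally, rank-nullity over $L$ gives $\dim_L(C) = \dim_L(C(J)) + \dim_L(\mbox{im}(\phi))$. Because $\mbox{im}(\phi) \subseteq L^d$, we have $\dim_L(\mbox{im}(\phi)) \leq d = \dim_K(J)$, and rearranging yields $\dim_L(C(J)) \geq \dim_L(C) - \dim_K(J)$, as claimed. I do not expect a genuine obstacle here; the only step that requires care is verifying that $\phi$ takes values in $L^d$ and is $L$-linear, which rests precisely on $J$ being defined over the subfield $K$.
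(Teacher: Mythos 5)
Your proposal is correct and takes essentially the same route as the paper: the paper's own proof of Corollary~\ref{cJ2} consists of the single sentence that it ``follows directly from Lemma~\ref{lJ2}'', and your argument (choosing a $K$-basis of $J$, observing that each constraint $\bc \cdot \by_i = 0$ is $L$-linear, and applying rank-nullity to the map $C \to L^d$) is exactly the spelled-out version of that direct deduction. No gaps; you have simply made explicit the counting of $L$-linear conditions that the paper leaves implicit.
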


\begin{proof}
This follows directly from Lemma \ref{lJ2}.
\end{proof}

\begin{remark}
Let $I$ be a  $K$-linear subspace of $K ^n$ and $J= \Rsupp (C)$.
Then $C=C(I)$ if and only if $I \subseteq J^\perp$, since the following statements are equivalent:
$$
\begin{array}{c}
C= C(I)\\
\Rsupp(\bc ) \subseteq I^\perp \mbox{ for all } \bc \in C\\
\Rsupp(C) \subseteq I^\perp\\
J \subseteq I^\perp\\
I \subseteq J^\perp\\
\end{array}
$$
Hence, it is not necessarily the case that $\dim_L (C(I)) \geq \dim _L(C) -1$ for all
one dimensional subspaces $I$ of $J$, since we might have that $J \subseteq J^\perp$.
\end{remark}

\begin{proposition}\label{p-RsuppC}
Let $L=\fqm$ and $K=\fq$. Let $C$ be an $L$-linear code.
If $m \geq n$, then there exists a $\bc \in C$ such that
$$
\Rsupp (\bc) = \Rsupp (C).
$$
\end{proposition}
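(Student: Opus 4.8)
The plan is to mirror exactly the Hamming-metric counting argument carried out just before Definition \ref{dJ2}, replacing the role of the coordinates $j$ by the hyperplanes of the rank support. Write $R=\Rsupp(C)$, set $w=\wt_R(C)=\dim_K R$ and $k=\dim_L C$, and note that we may assume $C\neq 0$ (otherwise take $\bc=0$). The goal is to produce a single codeword $\bc$ that is not ``support-degenerate'', i.e.\ whose rank support is not contained in any proper subspace of $R$. The strategy is to show that the set of codewords with strictly smaller rank support is covered by a controlled, finite family of proper $L$-subcodes, and then count.

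First I would identify the bad subcodes. For every $\bc\in C$ we have $\Rsupp(\bc)\subseteq R$ by definition of $\Rsupp(C)$ and Proposition \ref{p-Rsupp}$(4)$, so $\Rsupp(\bc)\subsetneq R$ holds if and only if $\Rsupp(\bc)$ lies in some hyperplane $H$ of $R$ (a $K$-subspace of dimension $w-1$). Each such $H$ can be written as $H=\by^\perp\cap R$ for some $\by\in K^n$ with $\by\notin R^\perp$. Because $\Rsupp(\bc)\subseteq R$ is automatic, the condition $\Rsupp(\bc)\subseteq H$ is equivalent to $\Rsupp(\bc)\subseteq \by^\perp$, i.e.\ to $\bc\in C(\langle\by\rangle)$ by Definition \ref{dJ2} and Lemma \ref{lJ2}. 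Since $R\not\subseteq\by^\perp$, the Remark above gives $C(\langle\by\rangle)\neq C$, while Corollary \ref{cJ2} gives $\dim_L C(\langle\by\rangle)\geq k-1$; hence $\dim_L C(\langle\by\rangle)=k-1$ exactly and $|C(\langle\by\rangle)|=q^{m(k-1)}$.

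The counting step then finishes the argument. The number of hyperplanes of the $w$-dimensional $\fq$-space $R$ is $\tfrac{q^w-1}{q-1}$, so the set of all codewords with non-full rank support is contained in a union of at most $\tfrac{q^w-1}{q-1}$ subcodes, each of cardinality $q^{m(k-1)}$, whence
\[
\Bigl|\{\,\bc\in C : \Rsupp(\bc)\subsetneq R\,\}\Bigr|\ \leq\ \frac{q^w-1}{q-1}\,q^{m(k-1)} .
\]
Using $w\leq n\leq m$ and $q\geq 2$ one has $\tfrac{q^w-1}{q-1}<q^w\leq q^m$, so the right-hand side is strictly less than $q^m\cdot q^{m(k-1)}=q^{mk}=|C|$. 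Therefore there exists $\bc\in C$ outside every $C(\langle\by\rangle)$, and for this $\bc$ no hyperplane of $R$ contains $\Rsupp(\bc)$, forcing $\Rsupp(\bc)=R=\Rsupp(C)$.

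I expect the main obstacle to be the translation step, not the counting: one must argue cleanly that ``$\Rsupp(\bc)$ is a proper subspace of $R$'' is equivalent to membership in one of the codimension-one $L$-subcodes $C(\langle\by\rangle)$, using that $\Rsupp(\bc)\subseteq R$ is automatic. This is precisely where the subtlety flagged in the paper (that $I\subseteq J^\perp$ and $I\cap J=\{0\}$ are not equivalent for subspaces) must be handled with care, by parametrizing the bad directions through hyperplanes of $R$ rather than through individual coordinates. Once that identification is in place, the hypothesis $m\geq n$ is exactly what makes the inequality $\tfrac{q^w-1}{q-1}<q^m$ hold, completing the proof.
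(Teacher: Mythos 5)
Your proof is correct and takes essentially the same approach as the paper: a union bound showing that the codewords with strictly smaller rank support are covered by at most $\frac{q^{\cdot}-1}{q-1}$ codimension-one subcodes $C(I)$ of cardinality $q^{m(k-1)}$ each, which is too few to exhaust $C$ precisely because $m\geq n$. The only (harmless) difference is organizational: you parametrize the covering by hyperplanes of $\Rsupp(C)$ up front, giving the marginally sharper count $\frac{q^w-1}{q-1}$, whereas the paper takes the union over all one-dimensional $I\subseteq K^n$ with $C(I)\neq C$ and derives the hyperplane contradiction afterwards.
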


\begin{proof}
First, note that the inclusion $\Rsupp (\bc) \subseteq \Rsupp (C)$ holds for all $\bc \in C$.
Let $k =\dim_L (C)$ and $J = \Rsupp(C)$. Then we have
$$
\left|\bigcup_{\substack{\dim(I)=1 \\ C(I)\not=C}} C(I)\right| \leq \frac{q^n-1}{q-1}q^{m(k-1)} < q^{mk} = |C|.
$$
In the first inequality, we use that the number of one dimensional subspaces of $K^n$ is $\frac{q^n-1}{q-1}$,
since $K =\fq$.
In the second inequality, we use that $ n\leq m$ so $\frac{q^n-1}{q-1}<q^m$. It follows that
$$
\bigcup_{\substack{\dim(I)=1 \\ C(I)\not=C}} C(I) \not=C.
$$
Let $\bc \in C\setminus \bigcup_{\dim(I)=1, C(I)\not=C} C(I)$. Then by definition $\Rsupp (\bc ) \subseteq J$.
Now suppose $\Rsupp (\bc ) \not=J$. Then $\Rsupp(\bc)$ is contained in a codimension $1$ subspace of $J$, hence there exists a codimension $1$ subspace $H$ of $K^n$ such that
$\Rsupp (\bc ) \subseteq H$ and $J\cap H\not= J$. \\
Let $I=H^\perp$. Then $I$ is a $1$-dimensional subspace of $K^n$ with $\Rsupp (\bc ) \subseteq H=I^\perp$, so $\bc \in C(I)$.
Now $C(I)=C$ by the choice of $\bc$,  hence $\Rsupp (\bx ) \subseteq I^\perp=H$ for all $\bx \in C$. Therefore $J=\Rsupp (C) \subseteq H$.
This is a contradiction, since $J\cap H\not= J$.
So $\Rsupp (\bc) = \Rsupp (C)$.
\end{proof}

\begin{remark}
We could not proof nor disproof this proposition for arbitrary field $K$ and a finite extension $L$ of degree $m\geq n$. The following example gives a counterexample in the case $m<n$.
\end{remark}

\begin{example}\label{ex-m<n}
Let $K=\ff _2$, $m=3$ and $L=\ff_8$. Let $\alpha \in L$ with $\alpha ^3=1+\alpha $.
Let $C$ be the $L$-linear code in $L^4$ generated by $\ba =(1,\alpha, \alpha ^2, \alpha ^3)$
and $\bb =(1,\alpha, \alpha ^2, \alpha ^4)$.
Then $\Rsupp (C)=K^4\not=\Rsupp (\bc)$ for all $\bc \in C$, since $\dim \Rsupp (\bc)\leq m=3$.
\end{example}

\begin{proposition}\label{p-Rsupp-x+y}
Let $L=\fqm$ and $K=\fq$.
Let $\bx, \by \in L^n$. Let $I=\Rsupp(\bx)$ and $J=\Rsupp(\by)$.
If $m\geq n$, then there are constants $\alpha, \beta \in L$ such that $\Rsupp(\alpha \bx+\beta\by)=I+J$.
\end{proposition}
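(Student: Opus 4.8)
The plan is to reduce this statement to the existence result in Proposition \ref{p-RsuppC} by passing to the $L$-linear code spanned by $\bx$ and $\by$. First I would record the easy inclusion: for any $\alpha,\beta \in L$, combining parts $(1)$ and $(2)$ of Proposition \ref{p-Rsupp} gives $\Rsupp(\alpha\bx + \beta\by) \subseteq \Rsupp(\alpha\bx) + \Rsupp(\beta\by) = \Rsupp(\bx) + \Rsupp(\by) = I + J$. So the entire content of the proposition is the existence of a choice of $\alpha,\beta$ realizing equality, i.e.\ making the rank support as large as the sum $I+J$ allows.

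Next I would set $C = \langle \bx, \by \rangle_L$, the $L$-linear code generated by $\bx$ and $\by$. By Proposition \ref{p-Rsupp}$(4)$, its rank support is $\Rsupp(C) = \Rsupp(\bx) + \Rsupp(\by) = I + J$. Since $L = \fqm$, $K = \fq$ and $m \geq n$, Proposition \ref{p-RsuppC} applies to $C$ and yields a codeword $\bc \in C$ with $\Rsupp(\bc) = \Rsupp(C) = I + J$. Because $\bx$ and $\by$ generate $C$, this $\bc$ can be written as $\bc = \alpha\bx + \beta\by$ for suitable $\alpha,\beta \in L$, and then $\Rsupp(\alpha\bx + \beta\by) = I + J$, which is exactly the desired conclusion.

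The only place where the hypothesis $m \geq n$ enters is inside Proposition \ref{p-RsuppC}, whose counting argument requires $\tfrac{q^n - 1}{q-1} < q^m$; so the genuinely hard step has already been carried out there, and the present proposition is essentially a corollary of it together with Proposition \ref{p-Rsupp}$(4)$. What remains to verify is merely that the reduction is legitimate: that $\{\bx,\by\}$ is a generating set of $C$ (true by definition, regardless of whether the two vectors are $L$-linearly independent or one of them is zero), and that Proposition \ref{p-Rsupp}$(4)$ indeed computes $\Rsupp(C)$ from these two generators. Neither point presents a difficulty, so I do not expect any real obstacle beyond correctly invoking the earlier results.
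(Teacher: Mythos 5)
Your proof is correct and is essentially the paper's own argument: the paper's proof consists of the single line ``This is a consequence of Proposition \ref{p-RsuppC},'' and your reduction --- forming $C=\langle\bx,\by\rangle_L$, computing $\Rsupp(C)=I+J$ via Proposition \ref{p-Rsupp}$(4)$, and then invoking Proposition \ref{p-RsuppC} to get a codeword $\bc=\alpha\bx+\beta\by$ with full rank support --- is exactly the intended way to fill in that consequence. (Only a cosmetic remark: in your preliminary inclusion the equality $\Rsupp(\alpha\bx)=\Rsupp(\bx)$ requires $\alpha\neq 0$, but since $\Rsupp(\mathbf{0})=\{0\}$ the inclusion survives in all cases, and that step is not needed for the main argument anyway.)
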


\begin{proof}
This is a consequence of Proposition \ref{p-RsuppC}.
\end{proof}

\section{Galois closure and trace}

Before we can give the various definitions of the generalized rank weights, we introduce the framework in which we study them.

\begin{definition}
Let $L/K$ be a Galois extension.
Let $C\subseteq L^n$ be an $L$-linear subspace.
The {\em trace map} $\Tr: L^n \rightarrow K^n$ is the component-wise extension of the trace map $\Tr: L \rightarrow K$.
The {\em restriction} of $C$ is defined by $C|_K = C \cap K^n$.
The \emph{Galois closure} $C^*$ of $C$ is the smallest subspace of $L^n$ that contains $C$ and that is closed under the component-wise action of the Galois group of $L/K$. A subspace is called \emph{Galois closed} if and only if it is equal to its own Galois closure. \\
If $C$ is a $K$-linear subspace, then we define the \emph{extension codes} $C\otimes L$ as the subspace of $L^n$ formed by taking all $L$-linear combinations of words of $C$.
\end{definition}

We can summarize the above relations in the following diagram:
\[ \xymatrix{
C \ar[r]^{\text{Gal}(L/K)} \ar[dr]^{\Tr(\mathbf{c})} & C^* \\
C|_K \ar@{^{(}->}[u] \ar@{^{(}->}[r] & \Tr(C) \ar[u]_{\otimes L}
} \]

The codes on the top row are over $L$, those on the lower row over $K$. The following Theorem is based on Giorgetti-Previtali \cite{giorgetti:2010}:
\begin{theorem}
Let $L/K$ be a Galois extension. Let $C$ be an $L$-linear code. Then the following statements are equivalent:
\begin{itemize}
\item $C$ is Galois closed: $C=C^*$.
\item $C$ is the extension of its restriction: $C=(C|_K) \otimes L$.
\item $C$ has a basis over $K^n$.
\item The trace of $C$ is equal to its restriction: $\Tr (C) = C|_K$.
\end{itemize}
\end{theorem}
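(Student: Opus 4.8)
My plan is to fix the Galois group $G=\mathrm{Gal}(L/K)$, which acts coordinate-wise on $L^n$ with fixed space exactly $K^n$, and to prove the four conditions equivalent through the cycle $(3)\Rightarrow(1)\Rightarrow(4)\Rightarrow(2)\Rightarrow(3)$. Two of the links are immediate. For $(3)\Rightarrow(1)$: if $C$ has a basis $\bb_1,\dots,\bb_k$ with $\bb_i\in K^n$, then every element of $C$ is an $L$-combination of the $G$-fixed vectors $\bb_i$, so any $\sigma\in G$ sends $\sum_i\lambda_i\bb_i$ to $\sum_i\sigma(\lambda_i)\bb_i\in C$; thus $C$ is $G$-stable and equals its own Galois closure. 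For $(2)\Rightarrow(3)$: a $K$-basis of $C|_K$ is automatically $L$-linearly independent (a standard fact for vectors in $K^n$), and by $(2)$ it spans $C$ over $L$, so it is the desired basis inside $K^n$.

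For $(1)\Rightarrow(4)$ I would use that $G$-stability forces $\Tr(\bc)=\sum_{\sigma\in G}\sigma(\bc)\in C$ for each $\bc\in C$; since also $\Tr(\bc)\in K^n$, this gives $\Tr(C)\subseteq C\cap K^n=C|_K$. The reverse inclusion $C|_K\subseteq\Tr(C)$ holds for \emph{any} $L$-linear code, and I would record it once and for all: because $L/K$ is separable the trace $\Tr:L\to K$ is surjective, so we may choose $\theta\in L$ with $\Tr(\theta)=1$, and then for $\bx\in C|_K$ the word $\theta\bx$ lies in $C$ with $\Tr(\theta\bx)=\Tr(\theta)\bx=\bx$. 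Combining the two inclusions yields $(4)$.

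The engine of the whole argument, and the step I expect to be the main obstacle, is the identity $C^*=\Tr(C)\otimes L$ valid for every $L$-linear $C$ (this is exactly the content of the diagram preceding the theorem), which I would prove as a preliminary lemma and then invoke for $(4)\Rightarrow(2)$. The easy inclusion $\Tr(C)\otimes L\subseteq C^*$ follows as above, since each $\Tr(\bc)$ lies in the $L$-linear, $G$-stable space $C^*$. The hard inclusion $C^*\subseteq\Tr(C)\otimes L$ is precisely where separability is indispensable: the trace form $(x,y)\mapsto\Tr(xy)$ is nondegenerate, so there is a $K$-basis $\theta_1,\dots,\theta_m$ of $L$ (with $m=\dim_K L$) admitting a trace-dual basis $\theta_1^*,\dots,\theta_m^*$, which gives the expansion $\lambda=\sum_i\Tr(\theta_i^*\lambda)\,\theta_i$ for every $\lambda\in L$. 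Applying this coordinate-wise to $\bc\in C$ produces $\bc=\sum_i\theta_i\,\Tr(\theta_i^*\bc)$ with each $\theta_i^*\bc\in C$, so $\bc\in\Tr(C)\otimes L$; since $\Tr(C)\otimes L$ is $G$-stable and $L$-linear, it then contains $C^*$.

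Finally I would close the cycle with $(4)\Rightarrow(2)$. One always has $(C|_K)\otimes L\subseteq C\subseteq C^*$, the first inclusion because $C$ is $L$-linear and the second by definition of the closure. Assuming $(4)$, the lemma gives $C^*=\Tr(C)\otimes L=(C|_K)\otimes L$, so the chain collapses to equalities and $C=(C|_K)\otimes L$, which is $(2)$. This completes the cycle and hence the theorem; the only nonformal ingredients are the surjectivity and nondegeneracy of the trace, both consequences of the separability built into the Galois hypothesis.
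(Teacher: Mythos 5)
Your proof is correct, but there is nothing in the paper to compare it against: the paper states this theorem without proof, attributing it to Giorgetti--Previtali \cite{giorgetti:2010}. Your cycle $(3)\Rightarrow(1)\Rightarrow(4)\Rightarrow(2)\Rightarrow(3)$ is complete and each link is sound: generators fixed by $G=\mathrm{Gal}(L/K)$ force $G$-stability; $G$-stability gives $\Tr(C)\subseteq C\cap K^n=C|_K$, with the reverse inclusion $C|_K\subseteq\Tr(C)$ coming from surjectivity of the trace via an element $\theta$ with $\Tr(\theta)=1$; and $K$-linearly independent vectors in $K^n$ remain $L$-linearly independent, which closes the loop. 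The real content is your preliminary lemma $C^*=\Tr(C)\otimes L$, which is exactly what the paper's diagram asserts (the arrow labelled $\otimes L$ from $\Tr(C)$ up to $C^*$) but never proves; your argument for the hard inclusion via a trace-dual basis --- writing $\bc=\sum_i\theta_i\,\Tr(\theta_i^*\bc)$ with each $\theta_i^*\bc\in C$, then noting $\Tr(C)\otimes L$ is $L$-linear and $G$-stable so it contains $C^*$ --- is precisely the technique the paper itself uses (citing Lang for the dual basis) in the subsequent theorem establishing $\Rsupp(C)=\Tr(C)$. So your write-up makes this part of the paper self-contained while staying inside its own toolkit: the only external ingredients are surjectivity of the trace and nondegeneracy of the trace form, both consequences of separability, so the proof is valid at the stated level of generality (arbitrary Galois extensions $L/K$, not only finite fields).
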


Interpreted in the diagram above, this theorem says that the two codes at the top are the same if and only if the two codes on the bottom are the same. This leads to an interesting observation about rank metric codes.

\begin{theorem}
Let $\bc\in C$. Then the rows of the matrix $M(\bc)$ are elements of the trace code $\Tr(C)$ and $\Rsupp(C)=\Tr(C)$.
\end{theorem}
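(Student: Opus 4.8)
The plan is to run everything through the trace-dual basis of the chosen basis $B=\{\alpha_1,\ldots,\alpha_m\}$ of $L$ over $K$. Since $L/K$ is Galois and therefore separable, the symmetric bilinear form $(x,y)\mapsto \Tr(xy)$ on $L$ is nondegenerate, so there is a uniquely determined dual basis $\{\beta_1,\ldots,\beta_m\}$ of $L$ over $K$ characterized by $\Tr(\alpha_i\beta_j)=\delta_{ij}$. This single algebraic fact is the crux of the whole argument; once it is in hand, the rest is bookkeeping with the definitions of $M(\bc)$, $\Rsupp$, and $\Tr$.

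First I would prove that every row of $M(\bc)$ lies in $\Tr(C)$. Starting from the defining expansion $c_j=\sum_{i=1}^m c_{ij}\alpha_i$ and pairing with $\beta_i$, the dual-basis relation recovers the coordinates as $c_{ij}=\Tr(\beta_i c_j)$. Reading this across the row index $j$, the $i$-th row $(c_{i1},\ldots,c_{in})$ is exactly the componentwise trace $\Tr(\beta_i\bc)$ of the vector $\beta_i\bc$. Because $C$ is $L$-linear and $\beta_i\in L$, the vector $\beta_i\bc$ again lies in $C$, so $\Tr(\beta_i\bc)\in\Tr(C)$. This settles the first assertion and, at the same time, the inclusion $\Rsupp(C)\subseteq\Tr(C)$: the space $\Rsupp(\bc)$ is the $K$-row space of $M(\bc)$, hence spanned by rows that all lie in the $K$-subspace $\Tr(C)$, and taking the $K$-span over all $\bc\in C$ (using Proposition \ref{p-Rsupp}$(4)$) keeps us inside $\Tr(C)$.

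For the reverse inclusion I would expand in the original basis rather than the dual one: $\Tr(c_j)=\sum_{i=1}^m c_{ij}\Tr(\alpha_i)$, so that $\Tr(\bc)=\sum_{i=1}^m \Tr(\alpha_i)\,(c_{i1},\ldots,c_{in})$. Since each coefficient $\Tr(\alpha_i)$ lies in $K$, this exhibits $\Tr(\bc)$ as a $K$-linear combination of the rows of $M(\bc)$, each of which belongs to $\Rsupp(\bc)\subseteq\Rsupp(C)$. Hence $\Tr(\bc)\in\Rsupp(C)$ for every $\bc\in C$, giving $\Tr(C)\subseteq\Rsupp(C)$ and, combined with the previous paragraph, the equality $\Rsupp(C)=\Tr(C)$. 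The only genuinely nonformal step is the existence of the trace-dual basis $\{\beta_i\}$, which is precisely where separability of $L/K$ (guaranteed here by the Galois hypothesis) is indispensable; without nondegeneracy of the trace form the identity $c_{ij}=\Tr(\beta_i c_j)$ would fail and the first claim could break down.
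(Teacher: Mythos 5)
Your proof is correct, and it differs from the paper's in one half. The inclusion $\Rsupp(C)\subseteq\Tr(C)$ is exactly the paper's argument: both rest on nondegeneracy of the trace form of the separable extension $L/K$ and the resulting dual basis $\{\beta_i\}$ with $\Tr(\alpha_i\beta_j)=\delta_{ij}$, which exhibits the $i$-th row of $M(\bc)$ as $\Tr(\beta_i\bc)$ with $\beta_i\bc\in C$ by $L$-linearity. Where you genuinely diverge is the reverse inclusion. The paper constructs a second, special basis $\beta_1,\ldots,\beta_m$ with $\Tr(\beta_1)=1$ and $\beta_2,\ldots,\beta_m$ spanning the kernel of $\Tr$, recomputes $M(\bc)$ with respect to that basis (implicitly invoking the basis-independence of $\Rsupp$), and identifies $\Tr(\bc)$ as literally the first row. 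You instead stay in the original basis and expand $\Tr(\bc)=\sum_{i=1}^m\Tr(\alpha_i)\,\bc_i$ by $K$-linearity of the trace; since each coefficient $\Tr(\alpha_i)$ lies in $K$, this displays $\Tr(\bc)$ directly as a $K$-linear combination of the rows, hence as an element of $\Rsupp(\bc)\subseteq\Rsupp(C)$. Your version is shorter and avoids both the auxiliary basis construction and the appeal to basis-independence of the rank support; the paper's version yields the slightly stronger picture that $\Tr(\bc)$ \emph{is} a row of $M(\bc)$ in a suitable basis, but nothing in the rest of the paper requires that refinement.
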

\begin{proof}
The extension $L/K$ is Galois, so in particular separable. The product defined by $\langle x,y\rangle:=\Tr(xy)$
is a $K$-bilinear non-degenerate inner product on $L$, see \cite[VII, \S 5, Theorem 9]{lang:1965}.
For the given $K$-linear basis  $\alpha _1, \ldots , \alpha _m$ of $L$
there exists a $K$-linear basis  $\alpha '_1, \ldots , \alpha '_m$ of $L$ such that
$\langle\alpha _i , \alpha '_j\rangle=\delta_{ij}$ is the Kronecker delta function, see \cite[VII, \S 5, Corollary 2]{lang:1965}.
So  $M(\bc)$ has entries $c_{ij}$ in $K$ such that $c_j = \sum_{i=1}^m c_{ij} \alpha _i$,  $\bc _i =(c_{i1}, \ldots , c_{in})$ is the $i$-th row of $M(\bc)$ and $\bc = \sum_{i=1}^m \bc_i\alpha _i$.
The trace map is $K$-linear . Hence
$$
\Tr (\alpha '_j\bc )= \sum_{i=1}^m \bc_i \Tr(\alpha _i \alpha '_j)= \sum_{i=1}^m \bc_i \delta_{ij} = \bc _j.
$$
Hence the rows of the matrix $M(\bc)$ are elements of the trace code $\Tr(C)$.
So $\Rsupp(C)\subseteq\Tr(C)$, since $\Rsupp(C)$ is generated by the rows of $M(\bc)$. \\
Now we consider the converse inclusion. There  exists a $\beta_1 \in L $ such that $\Tr (\beta_1 )=1$. Let $\beta _2, \ldots , \beta _m$
be a $K$-linear basis of the kernel of $\Tr$. Then $\beta _1, \ldots , \beta _m$ is a $K$-linear basis of $L$
such that $\Tr (\beta _i)$ is one if $i=1$ and is zero otherwise.
Without loss of generality we may assume that the matrix $M(\bc)$ is obtained with respect to this basis.
So  now $M(\bc)$ has entries $c'_{ij}$ with $c_j = \sum_{i=1}^m c'_{ij} \beta _i$,
$\bc' _i =(c'_{i1}, \ldots , c'_{in})$ is the $i$-th row of $M(\bc)$
and $\bc = \sum_{i=1}^m \bc'_i\beta _i$. Hence $\Tr (\bc )= \sum_{i=1}^m \bc'_i \Tr(\beta _i)= \bc'_1 \in \Rsupp(C)$.
Therefore $\Tr(C)\subseteq \Rsupp(C)$.
\end{proof}

\begin{corollary}
Let $D$ be a subcode of the $L$-linear code $C$. Then $\Rsupp(D)=\Tr(D)$ and thus
\[ d_{R,r}(C)=\min_{\substack{D \subseteq C \\ \dim(D) =r}}\wt_R(D)=\min_{\substack{D \subseteq C \\ \dim(D) =r}}\dim \Tr(C)=\min_{\substack{D \subseteq C \\ \dim(D) =r}}\dim D^* \]
\end{corollary}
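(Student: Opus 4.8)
The plan is to reduce everything to the theorem just proved, which gives $\Rsupp(E)=\Tr(E)$ for every $L$-linear code $E$, together with the characterization of Galois closed codes. Since a subcode $D\subseteq C$ is itself an $L$-linear code, the first assertion $\Rsupp(D)=\Tr(D)$ is exactly that theorem applied to $D$. In the displayed chain, the first equality is Definition \ref{d-grwJP}, and the second follows by substituting $\wt_R(D)=\dim_K\Rsupp(D)=\dim_K\Tr(D)$ term by term inside the minimum. (I would also flag an apparent typo: the middle expression ought to read $\dim\Tr(D)$, not $\dim\Tr(C)$.)

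The real content is the final equality $\dim_K\Tr(D)=\dim_L D^*$. First I would establish that the rank support is invariant under the Galois action. If $\sigma\in\mathrm{Gal}(L/K)$ and $\bd\in L^n$ has rows $\bd_1,\ldots,\bd_m\in K^n$ with respect to a basis $B$, then since the $\bd_i$ are fixed by $\sigma$ we have $\sigma(\bd)=\sum_{i=1}^m\bd_i\,\sigma(\alpha_i)$, so $\sigma(\bd)$ lies in the same $K$-row space; hence $\Rsupp(\sigma(\bd))=\Rsupp(\bd)$. Combining this with Proposition \ref{p-Rsupp}$(4)$ and the fact that $D^*$ is $L$-generated by the Galois conjugates of a generating set of $D$, I obtain $\Rsupp(D^*)=\Rsupp(D)$.

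Next I would use that $D^*$ is Galois closed. By the preceding theorem it has a basis $\bg_1,\ldots,\bg_s$ lying in $K^n$, where $s=\dim_L D^*$. Choosing the basis $B$ with $\alpha_1=1$, each $M_B(\bg_i)$ has $\bg_i$ as its only nonzero row, so $\Rsupp(\bg_i)=K\bg_i$ and, by Proposition \ref{p-Rsupp}$(4)$, $\Rsupp(D^*)$ is the $K$-span of the $\bg_i$, which has dimension $s$ since the $\bg_i$ are $K$-independent. Therefore $\dim_K\Rsupp(D^*)=\dim_L D^*$, and chaining $\dim_K\Tr(D)=\dim_K\Rsupp(D)=\dim_K\Rsupp(D^*)=\dim_L D^*$ closes the argument.

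The main obstacle is the Galois-invariance of the rank support together with the passage from $\Rsupp(D)$ to $\Rsupp(D^*)$; once that is in hand, the dimension count for the Galois closed code is routine. A secondary subtlety worth tracking is that ``$\dim$'' refers to different ground fields on the two sides ($K$ for $\Tr(D)$ and $\Rsupp(D)$, but $L$ for $D^*$), so the identity $\dim_K\Rsupp(D^*)=\dim_L D^*$ is precisely the point at which the Galois closed structure of $D^*$ is used.
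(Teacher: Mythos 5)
Your proof is correct, and it is worth noting that the paper itself states this corollary with no proof at all: the intended justification is that the preceding theorem applies verbatim to the subcode $D$ (which is itself an $L$-linear code), giving $\Rsupp(D)=\Tr(D)$ and hence $\wt_R(D)=\dim\Tr(D)$, while the equality $\dim_K\Tr(D)=\dim_L D^*$ is meant to be read off from the Galois-closure framework: since $D^*$ is Galois closed, the Giorgetti--Previtali theorem gives $D^*=(D^*|_K)\otimes L=\Tr(D^*)\otimes L$ (this is the arrow labelled $\otimes L$ in the paper's diagram), and $\Tr(D^*)=\Tr(D)$ because $\Tr\circ\sigma=\Tr$ for every $\sigma$ in the Galois group, so $\dim_L D^*=\dim_K\Tr(D)$. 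You handle the first two equalities the same way (and you are right that $\dim\Tr(C)$ is a typo for $\dim\Tr(D)$), but for the last equality you take a genuinely different, more self-contained route: you prove Galois-invariance of the rank support, deduce $\Rsupp(D^*)=\Rsupp(D)$ via Proposition \ref{p-Rsupp}$(4)$ and the fact that the Galois conjugates of a generating set of $D$ generate $D^*$, and then count $\dim_K\Rsupp(D^*)=\dim_L D^*$ using a basis of $D^*$ lying in $K^n$. This buys independence from the extension-of-scalars relation $D^*=\Tr(D^*)\otimes L$ (you only use the ``basis over $K^n$'' characterization of Galois closed codes), at the cost of re-proving an invariance that the paper's route gets for free from $\Tr\circ\sigma=\Tr$. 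One small imprecision: the phrase ``$\sigma(\bd)$ lies in the same $K$-row space'' should be sharpened to say that the matrix of $\sigma(\bd)$ with respect to the basis $\sigma(B)=\{\sigma(\alpha_1),\ldots,\sigma(\alpha_m)\}$ has the same rows $\bd_1,\ldots,\bd_m$, so basis-independence of the row space (exactly the trick in the paper's proof of Proposition \ref{p-Rsupp}$(1)$) gives $\Rsupp(\sigma(\bd))=\Rsupp(\bd)$; this is easily repaired and does not affect correctness.
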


\section{Equivalent definitions}

We will now discuss previous definitions of the generalized Hamming weights and to what extend they are consistent with Definition \ref{d-grwJP}. The definition of Oggier-Sboui in \cite{oggier:2012} is, in our notation, as follows:

\begin{definition}\label{d-grwOS}
Consider the field extension $\mathbb{F}_{q^m}/\mathbb{F}_q$. Let $C$ be an $\mathbb{F}_{q^m}$-linear code and let $m\geq n$. Then the $r$-th generalized rank weight is defined as
\[ \min _{\substack{D \subseteq C \\ \dim(D) =r}} \max_{\bd \in D} \wt_R(\bd). \]
\end{definition}

\begin{theorem}
Let $L=\fqm$ and $K=\fq$. Let $C$ be an $L$-linear code with $m\geq n$. Then Definitions \ref{d-grwOS} and \ref{d-grwJP} give the same values, that is,
\[ d_{R,r}(C)=\min_{\substack{D \subseteq C \\ \dim(D) =r}}\wt_R(D) = \min _{\substack{D \subseteq C \\ \dim(D) =r}} \max_{\bd \in D} \wt_R(\bd). \]
\end{theorem}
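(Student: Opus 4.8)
The plan is to show that the two quantities being minimized agree term by term for every individual $r$-dimensional subcode $D$, so that their minima over all such subcodes coincide. Fix an $r$-dimensional subcode $D \subseteq C$. First I would record the easy inequality $\max_{\bd \in D} \wt_R(\bd) \leq \wt_R(D)$, which needs no hypothesis on $m$ and $n$: for each $\bd \in D$ the definition of the rank support of a subcode gives $\Rsupp(\bd) \subseteq \Rsupp(D)$, so $\wt_R(\bd) = \dim_K \Rsupp(\bd) \leq \dim_K \Rsupp(D) = \wt_R(D)$, and passing to the maximum over $\bd$ preserves the bound.

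For the reverse inequality I would invoke Proposition \ref{p-RsuppC}. The key observation is that this proposition applies not only to $C$ but to any $L$-linear code, and a subcode $D$ of $C$ is itself $L$-linear. Since $m \geq n$ by hypothesis, Proposition \ref{p-RsuppC} applied to $D$ produces a single codeword $\bc \in D$ with $\Rsupp(\bc) = \Rsupp(D)$, and hence $\wt_R(\bc) = \wt_R(D)$. Consequently $\max_{\bd \in D} \wt_R(\bd) \geq \wt_R(\bc) = \wt_R(D)$. Combining the two inequalities yields $\max_{\bd \in D} \wt_R(\bd) = \wt_R(D)$ for every $r$-dimensional subcode $D$, and taking the minimum over all such $D$ gives the claimed equality of Definitions \ref{d-grwOS} and \ref{d-grwJP}.

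The proof is short once Proposition \ref{p-RsuppC} is in place; the only real content is recognizing that the existence of a codeword attaining the \emph{full} rank support of a subcode (the content of the $m \geq n$ case) is exactly what collapses the maximum over codewords to the rank support weight. The role of the hypothesis $m \geq n$ is precisely to guarantee such a codeword inside each subcode, and without it the maximum can fall strictly below the rank support weight, as Example \ref{ex-m<n} already shows for the whole code. I therefore anticipate no substantial obstacle beyond correctly applying Proposition \ref{p-RsuppC} at the level of subcodes rather than to $C$ itself.
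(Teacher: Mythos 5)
Your proof is correct and matches the paper's own argument: the paper's (one-line) proof also consists of applying Proposition \ref{p-RsuppC} to each $r$-dimensional subcode $D$ to produce a codeword attaining the full rank support weight of $D$. You have simply spelled out the easy inequality $\max_{\bd \in D} \wt_R(\bd) \leq \wt_R(D)$ and the passage to minima, which the paper leaves implicit.
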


\begin{proof}
By Proposition \ref{p-RsuppC}, every subcode $D$ contains a word of maximal rank weight.
\end{proof}

Kurihara-Matsumoto-Uyematsu \cite{kurihara:2012,kurihara:2013} define the \emph{relative generalized rank weights}, that induce the following definition of the generalized rank weights:

\begin{definition}\label{d-grwKMU}
Consider the field extension $\mathbb{F}_{q^m}/\mathbb{F}_q$. Let $C$ be an $\mathbb{F}_{q^m}$-linear code. Then the $r$-th generalized rank weight is defined as
\[ \min_{\substack{V\subseteq L^n, V=V^* \\ \dim(C\cap V)\geq r}}\dim V. \]
\end{definition}

Both of Definitions \ref{d-grwOS} and \ref{d-grwKMU} have an obvious extension to rank metric codes over the field extension $L/K$. Where possible, we will show the equivalence between the definitions in as much generality as possible. \\
Ducoat \cite{ducoat:2013} proved the following for $m\geq n$:
\[ \min _{\substack{D \subseteq C \\ \dim(D) =r}} \max_{\bd \in D^*} \wt_R(\bd)=\min_{\substack{V\subseteq L^n, V=V* \\ \dim(C\cap V)\geq r}}\dim V. \]
The left hand side is almost Definition \ref{d-grwOS}, but with $D^*$ instead of $D$ in the maximum.

The following proof is largely inspired by Ducoat; note that it works more general over $L$ instead of $\mathbb{F}_{q^m}$:

\begin{theorem}Let $L$ be a Galois extension of $K$.
Let $C$ be an $L$-linear code of dimension $k$. Let $r$ be an integer such that $0\leq r \leq k$.
Then Definitions \ref{d-grwKMU} and \ref{d-grwJP} give the same values, that is,
\[ \min_{\substack{V\subseteq L^n, V=V^* \\ \dim(C\cap V)\geq r}}\dim V = \min_{\substack{D\subseteq C \\ \dim D=r}}\dim D^*. \]
\end{theorem}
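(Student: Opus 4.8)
The plan is to prove the two inequalities separately by setting up a correspondence between the Galois-closed subspaces $V$ appearing on the left-hand side and the subcodes $D$ appearing on the right-hand side. The single fact I will lean on throughout is that the Galois closure $D^*$ is, by definition, the \emph{smallest} Galois-closed subspace of $L^n$ containing $D$. In particular $D^*$ is itself Galois-closed and $L$-linear, and whenever a Galois-closed subspace $V$ contains $D$ it automatically contains $D^*$. I will also use freely that the relevant subspaces ($C$, $V$, $C\cap V$, $D$, $D^*$) are all $L$-linear, so that dimensions are taken over $L$.

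For the inequality $\min_{V=V^*,\,\dim(C\cap V)\ge r}\dim V \le \min_{\dim D=r}\dim D^*$, I would take a subcode $D\subseteq C$ with $\dim D=r$ that attains the minimum on the right and simply use $V=D^*$ as a test space on the left. This $V$ is Galois-closed by construction, and since $D\subseteq C$ together with $D\subseteq D^*=V$ gives $D\subseteq C\cap V$, we obtain $\dim(C\cap V)\ge r$. Thus $V$ is admissible in the left-hand minimization and $\dim V=\dim D^*$, which yields the claimed inequality.

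For the reverse inequality, I would take a Galois-closed $V$ with $\dim(C\cap V)\ge r$ attaining the left-hand minimum. Because $\dim(C\cap V)\ge r$, I can choose an $L$-linear subcode $D\subseteq C\cap V$ of dimension exactly $r$. Then $D\subseteq V$ with $V$ Galois-closed forces $D^*\subseteq V$, so $\dim D^*\le\dim V$. Since this $D$ is admissible on the right, we conclude $\min_{\dim D=r}\dim D^*\le\dim D^*\le\dim V$, i.e. the right-hand side is at most the left-hand side. Combining the two inequalities finishes the proof.

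The argument is essentially formal once the correspondence is in place, so there is no single hard computation to grind through; the only point requiring care is matching the admissibility constraints on the two sides. I expect whatever subtlety there is to lie in the second direction, where the exact-dimension constraint $\dim D=r$ on the right must be reconciled with the inequality constraint $\dim(C\cap V)\ge r$ on the left. This is precisely why one should extract a dimension-$r$ subcode of $C\cap V$ rather than work with all of $C\cap V$: passing to such a $D$ keeps $D^*$ inside $V$ while producing an admissible competitor of the correct dimension.
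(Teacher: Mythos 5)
Your proof is correct and follows essentially the same route as the paper: one direction uses $V=D^*$ as an admissible competitor on the left, the other extracts a dimension-$r$ subcode $D\subseteq C\cap V$ and invokes minimality of the Galois closure to get $D^*\subseteq V$. The only cosmetic difference is that you organize the argument explicitly as two inequalities between the attained minima, whereas the paper phrases the same two observations about arbitrary admissible $V$ and $D$; the mathematical content is identical.
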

\begin{proof}
Let $V$ a Galois closed subspace of $L^n$ such that $\dim(C\cap V)\geq r$. (Such a $V$ always exists, since $C^*$ is such a subspace and $r\leq k$.) Let $D\subseteq C\cap V$ with $\dim D=r$. Since $V$ is Galois closed and $D\subseteq V$, we have $D^*\subseteq V$. So $D^*$ is a subspace of smaller dimension than $V^*$ with $\dim(C\cap D^*)\geq r$. On the other hand, for all $D\subseteq C$ with $\dim D=r$, we have that $\dim(C\cap D^*)\geq r$. So the formula follows.
\end{proof}

We will now continue to prove the equivalence of Definition \ref{d-grwJP} and the variation of Definition \ref{d-grwOS} that was used by Ducoat \cite{ducoat:2013}. This requires some propositions. For the rest of this section, let $L/K$ be a cyclic Galois extension of degree $m$ --- we make this assumption so we can use the results in \cite{augot:2013}. Enumerate the elements of the Galois group as $\theta_i$ and let $\bx^{[i]}=\bx^{\theta_i}$ be the (component-wise) action of the Galois group.

\begin{proposition}
Let $L$ be a cyclic Galois extension of $K$.
For any $\bx \in L^n$ we have
\[ \langle\bx\rangle^*=\Rsupp(\bx)\otimes L. \]
\end{proposition}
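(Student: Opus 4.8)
The plan is to identify both sides explicitly as $L$-subspaces of $L^n$ spanned by concrete families of vectors, and then to exhibit each spanning family as an invertible $L$-linear transform of the other. First I would describe the left-hand side. Enumerate the Galois group as $\theta_1=\mathrm{id},\ldots,\theta_m$, so the conjugates are $\bx^{[1]}=\bx,\bx^{[2]},\ldots,\bx^{[m]}$. Since the Galois group permutes these conjugates among themselves (as $\theta_k\theta_j$ is again a group element) and acts $L$-semilinearly, the $L$-span $W=\langle\bx^{[1]},\ldots,\bx^{[m]}\rangle$ is itself Galois closed and contains $\bx$. Any Galois closed $L$-subspace containing $\bx$ must contain every conjugate, so $W$ is the smallest such space; hence $\langle\bx\rangle^*=W$.

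Next I would relate the conjugates to the rank support. Writing $\bx=\sum_{i=1}^m \bx_i\alpha_i$ with rows $\bx_i\in K^n$ (so that $\Rsupp(\bx)$ is the $K$-span of $\bx_1,\ldots,\bx_m$) and using that each $\bx_i$ is fixed by $\mathrm{Gal}(L/K)$, we obtain $\bx^{\theta_j}=\sum_{i=1}^m \theta_j(\alpha_i)\,\bx_i$. This immediately displays each conjugate as an $L$-combination of the rows $\bx_i$, giving the inclusion $\langle\bx\rangle^*\subseteq \Rsupp(\bx)\otimes L$.

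For the reverse inclusion I would invert this relation. The equations above say, for $j=1,\ldots,m$, that the vector of conjugates equals $A$ times the vector of rows, where $A=(\theta_j(\alpha_i))_{j,i}$ is an $m\times m$ matrix over $L$. The crux is that $A$ is invertible: this follows from the non-degeneracy of the trace form already used in the excerpt, since the Gram matrix $(\Tr(\alpha_i\alpha_j))_{i,j}$ equals $A^{\top}A$, so $\det(A)^2$ is the nonzero discriminant of the basis. Invertibility of $A$ then expresses each $\bx_i$ as an $L$-combination of the conjugates $\bx^{[j]}$, whence $\bx_i\in\langle\bx\rangle^*$ and therefore $\Rsupp(\bx)\otimes L\subseteq\langle\bx\rangle^*$. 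Combining the two inclusions yields the claimed equality.

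The main obstacle is the invertibility of the conjugate matrix $A$, which is precisely where separability of $L/K$ (guaranteed by the extension being Galois) enters; it is cleanest to deduce this from the non-degeneracy of $\langle x,y\rangle=\Tr(xy)$ rather than invoking independence of characters directly, since the trace form is already in play earlier in the paper.
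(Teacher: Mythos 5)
Your proof is correct, and while your first inclusion coincides with the paper's, your reverse inclusion takes a genuinely different route. The paper first normalizes: after a change of basis it puts $M(\bx)$ in reduced row echelon form so that only the first $l=\rk(M(\bx))$ rows are nonzero, applies only the automorphisms $\theta^0,\ldots,\theta^{l-1}$ (powers of a generator of the cyclic Galois group), and then invokes Theorem 3 of Augot--Loidreau--Robert \cite{augot:2013}: the $l\times l$ ``Vandermonde'' (Moore-type) matrix $\bigl(\theta^{i-1}(\alpha_j)\bigr)_{1\leq i,j\leq l}$ is invertible because $\alpha_1,\ldots,\alpha_l$ are $K$-independent. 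That lemma is precisely why the paper restricts to cyclic extensions in this section. You instead keep all $m$ rows and all $m$ automorphisms, and prove invertibility of the full conjugate matrix $A=(\theta_j(\alpha_i))$ from the identity $(\Tr(\alpha_i\alpha_{i'}))_{i,i'}=A^{\top}A$ together with non-degeneracy of the trace form, which the paper itself establishes (via Lang) in the section on the Galois closure; the identity holds because $\Tr(x)=\sum_k\theta_k(x)$ for a Galois extension. The payoff of your route is generality: nothing in your argument uses cyclicity, so the proposition (and the two lemmas that follow from it) hold for an arbitrary finite Galois extension $L/K$, removing the standing assumption the paper makes explicitly ``so we can use the results in \cite{augot:2013}.'' What the paper's route buys in exchange is a finer quantitative fact, namely that the row space is already recovered from the first $l$ conjugates $\bx^{[0]},\ldots,\bx^{[l-1]}$ alone, though this refinement is not used elsewhere in the paper. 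One further merit of your write-up: you actually justify the identification $\langle\bx\rangle^*=\langle\bx^{[1]},\ldots,\bx^{[m]}\rangle$ via semilinearity of the Galois action on the span of the conjugates, whereas the paper asserts it without proof.
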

\begin{proof}
First we note that $\langle\bx\rangle^*=\langle\bx,\bx^{[1]},\ldots,\bx^{[m-1]}\rangle$. We prove equality by proving two inclusions. Let $(\alpha_1,\ldots,\alpha_m)$ a basis of $L$ of $K$ and write $\bx=\alpha_1\bx_1+\ldots+\alpha_m\bx_m$. This means the $\bx_i$ are the rows of $M(\bx)$. We have that
\[ \bx^{[i]}=(\alpha_1\bx_1+\ldots+\alpha_m\bx_m)^{[i]}=\alpha_1^{[i]}\bx_1+\ldots+\alpha_m^{[i]}\bx_m \]
and $\alpha_j^{[i]}\in L$, so $\bx^{[i]}\in\langle\bx_1,\ldots,\bx_m\rangle\otimes L$ and thus
$\langle \bx \rangle^* \subseteq \Rsupp(\bx) \otimes L$. \\
For the reverse inclusion we may assume without loss of generality after a permutation of coordinates that the first $l$ columns of $M(\bx)$ are independent, where $l$ is the rank of $M(\bx)$, and moreover that $M(\bx)$ is in reduced row echelon form. Hence $\bx=\alpha_1\bx_1+\ldots+\alpha_l\bx_l$ and $\bx_j = {\bf 0}$ for all $l<j\leq m$. Then
\[ \bx^{[i]} = \alpha_1^{[i]}\bx_1+\ldots+\alpha_l^{[i]}\bx_l  \ \mbox { for all } \  i=0, \ldots ,l-1.\]
(This holds in fact for $i=0,\ldots,m-1$, but we only need it up to $i=l-1$.)
The ``Vandermonde'' matrix with entries $\alpha _j^{[i-1]}$, $1 \leq i,j \leq l$ is invertible by
\cite[Theorem 3]{augot:2013}, since the $\alpha_1, \ldots,\alpha_l$ are  independent over $K $.
So $\bx_j\in \langle \bx^{[0]},\ldots \bx^{[l-1]} \rangle \subseteq \langle\bx\rangle^*$.
Therefore $\Rsupp(\bx)\otimes L \subseteq \langle \bx \rangle^*$ and we conclude that
$\langle \bx \rangle^* = \Rsupp(\bx) \otimes L$.
\end{proof}

From this Proposition, the following Lemmas follows directly. They are a generalization of I.1 and II.2 of Ducoat \cite{ducoat:2013}.

\begin{lemma}
Let $L$ be a cyclic Galois extension of $K$.
For any $\bx \in L^n$ we have
\[ \dim(\langle \bx \rangle^*)=\rk(M(\bx)). \]
\end{lemma}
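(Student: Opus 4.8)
The plan is to deduce this immediately from the Proposition proved just above, which establishes that $\langle \bx \rangle^* = \Rsupp(\bx) \otimes L$. Since dimension is the one invariant I need, the entire lemma reduces to computing the $L$-dimension of the extension code $\Rsupp(\bx) \otimes L$ and recognizing it as the rank of $M(\bx)$.

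First I would recall that $\Rsupp(\bx)$ is by definition the $K$-linear row space of $M(\bx)$, so its $K$-dimension equals the row rank of $M(\bx)$, which is $\rk(M(\bx))$. Then I would invoke the standard fact that extension of scalars preserves dimension: if $W \subseteq K^n$ is a $K$-subspace, then $\dim_L(W \otimes L) = \dim_K(W)$, because a $K$-basis of $W$ remains an $L$-basis of $W \otimes L$ (linear independence over $K$ is preserved over $L$ for vectors in $K^n$, and the $L$-span by construction equals $W \otimes L$). Applying this with $W = \Rsupp(\bx)$ gives
\[ \dim_L(\Rsupp(\bx) \otimes L) = \dim_K(\Rsupp(\bx)) = \rk(M(\bx)). \]
Combining with the Proposition yields $\dim(\langle \bx \rangle^*) = \rk(M(\bx))$, as required.

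There is essentially no obstacle here: the work has been front-loaded into the preceding Proposition, and this Lemma is a one-line consequence. The only point deserving a word of care is the scalar-extension identity $\dim_L(W \otimes L) = \dim_K(W)$, which is routine linear algebra but should be stated explicitly so the reader sees why passing from the $K$-row-space to its $L$-span does not change the dimension count. I would therefore keep the proof to two sentences: cite the Proposition for $\langle \bx \rangle^* = \Rsupp(\bx) \otimes L$, note that $\dim_K \Rsupp(\bx) = \rk(M(\bx))$ by the definition of rank support, and observe that tensoring up to $L$ preserves this dimension.
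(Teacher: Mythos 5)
Your proof is correct and takes exactly the paper's route: the paper gives no separate argument for this Lemma, stating only that it ``follows directly'' from the preceding Proposition $\langle \bx \rangle^* = \Rsupp(\bx) \otimes L$, and your two-line deduction is precisely that implicit argument. You simply make explicit the routine scalar-extension fact $\dim_L(W \otimes L) = \dim_K(W)$ for a $K$-subspace $W \subseteq K^n$, together with the definitional identity $\dim_K \Rsupp(\bx) = \rk(M(\bx))$, which is a reasonable level of detail to spell out.
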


\begin{lemma}
Let $L$ be a cyclic Galois extension of $K$.
For all Galois closed sets $V$ of dimension $n\leq m$ there is an $\bx\in V$ such that $V=\langle\bx\rangle^*$.
\end{lemma}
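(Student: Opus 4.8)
The plan is to reduce the statement, via the preceding Proposition (that $\langle\bx\rangle^*=\Rsupp(\bx)\otimes L$), to a concrete existence question about rank supports, and then to settle that question by an explicit construction rather than by a counting argument. Since $V$ is Galois closed, the characterization theorem gives $V=(V|_K)\otimes L$; I write $W=V|_K\subseteq K^n$, a $K$-subspace with $\dim_K W=\dim_L V=:d$. Because extension of scalars is injective on subspaces of $K^n$ (that is, $(W'\otimes L)|_K=W'$), combining with $\langle\bx\rangle^*=\Rsupp(\bx)\otimes L$ shows that $\langle\bx\rangle^*=V$ is equivalent to $\Rsupp(\bx)=W$. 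Thus it suffices to exhibit a single $\bx\in V$ whose matrix $M(\bx)$ has row space exactly $W$.

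For the construction I would fix a $K$-basis $w_1,\ldots,w_d$ of $W$ together with the $K$-basis $\alpha_1,\ldots,\alpha_m$ of $L$ already in use, and set $\bx=\sum_{k=1}^d \alpha_k w_k$. This is a legitimate choice precisely because $d\leq m$: here $d=\dim_L V\leq n\leq m$ by the standing hypothesis $n\leq m$, so there are enough basis elements $\alpha_k$ to pair with the $w_k$. Clearly $\bx\in V$, since it is an $L$-linear combination of elements of $W\subseteq V$.

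The key step is to read off the rows of $M(\bx)$. For each coordinate $j$ one has $x_j=\sum_{k=1}^d (w_k)_j\,\alpha_k$, so the coefficient of $\alpha_i$ in $x_j$ equals $(w_i)_j$ when $i\leq d$ and $0$ when $i>d$. Hence the $i$-th row of $M(\bx)$ is $w_i$ for $i\leq d$ and is $\mathbf 0$ for $i>d$, so $\Rsupp(\bx)=\langle w_1,\ldots,w_d\rangle=W$. Applying the Proposition then gives $\langle\bx\rangle^*=\Rsupp(\bx)\otimes L=W\otimes L=V$, as required.

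Since this is a direct construction, I do not expect a serious obstacle; the only points needing care are the bookkeeping that $d\leq m$ (so the construction is defined) and the verification that the rows of $M(\bx)$ are exactly $w_1,\ldots,w_d$, which relies on using the same basis $\alpha_1,\ldots,\alpha_m$ both to build $\bx$ and to form $M(\bx)$. I note that one could alternatively invoke Proposition \ref{p-RsuppC} applied to the code $V$, whose rank support is $\Rsupp(V)=\Tr(V)=V|_K=W$, to produce such an $\bx$; but that route is restricted to finite fields, whereas the explicit construction above works for an arbitrary cyclic Galois extension $L/K$.
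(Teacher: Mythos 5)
Your proposal is correct and is essentially the paper's own proof: the paper says ``pick a basis for $V$ (which, since $V$ is Galois closed, can be taken inside $K^n$), let these be the rows of $M(\bx)$, and add extra zero rows,'' which is exactly your construction $\bx=\sum_{k=1}^d \alpha_k w_k$ with $w_1,\ldots,w_d$ a $K$-basis of $V|_K$, followed by the same appeal to $\langle\bx\rangle^*=\Rsupp(\bx)\otimes L$. Your write-up just makes explicit the bookkeeping (that $\bx\in V$, that $d\leq m$, and that the rows of $M(\bx)$ are exactly the $w_i$) which the paper leaves implicit.
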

\begin{proof}
Pick a basis for $V$ and let these be rows of $M(\bx)$. Add extra zero rows.
\end{proof}

The first Lemma originally had the assumption $m\geq n$, but this can be dropped. In the second Lemma we can not do that: the number of basis vectors for $\langle\bx\rangle^*$ is equal to the extension degree $m$, so we can never have $V=\langle\bx\rangle^*$ if $\dim V>m$. (See also Example \ref{ex-m<n}.) \\
We now show the equivalence between Definitions \ref{d-grwOS} and the variation of \ref{d-grwJP} as used before.

\begin{theorem}
Let $L$ be a cyclic Galois extension of $K$ of degree $m$.
Let $C$ be an $L$-linear code in $L^n$ with $m\geq n$. Then
\[ \max_{\bd\in D^*} \rk(M(\bd))=\dim D^* \]
\end{theorem}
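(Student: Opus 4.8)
The plan is to prove the two inequalities separately, leaning entirely on the two preceding Lemmas: the one identifying $\rk(M(\bx))$ with $\dim\langle\bx\rangle^*$, and the one that realizes any sufficiently small Galois closed space as $\langle\bx\rangle^*$ for a single vector $\bx$.

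First I would dispose of the inequality $\max_{\bd\in D^*}\rk(M(\bd)) \le \dim D^*$. Fix any $\bd \in D^*$. Since $D^*$ is Galois closed and contains $\bd$, it contains the whole Galois orbit of $\bd$; equivalently $\langle\bd\rangle \subseteq D^*$ gives $\langle\bd\rangle^* \subseteq (D^*)^* = D^*$ by monotonicity and idempotence of the closure. Applying the first of the two preceding Lemmas, $\rk(M(\bd)) = \dim\langle\bd\rangle^* \le \dim D^*$. As this holds for every $\bd$, the maximum over $D^*$ is at most $\dim D^*$.

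For the reverse inequality I would exhibit a single witness attaining $\dim D^*$. The space $D^*$ is Galois closed by construction, and it is a subspace of $L^n$, so $\dim D^* \le n \le m$ by the standing hypothesis $m \ge n$. This is precisely the dimension condition needed to apply the second Lemma to $V = D^*$: it yields an $\bx \in D^*$ with $D^* = \langle\bx\rangle^*$. The first Lemma then gives $\rk(M(\bx)) = \dim\langle\bx\rangle^* = \dim D^*$, so $\max_{\bd\in D^*}\rk(M(\bd)) \ge \dim D^*$. Combining the two inequalities yields the claimed equality.

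I do not expect a serious obstacle, as the two Lemmas carry the actual content; the argument is just their correct composition. The one point to check carefully is the dimension bound $\dim D^* \le m$ that licenses the second Lemma — this is exactly where $m \ge n$ enters, and it is also why the analogue breaks down for $m < n$, cf. Example \ref{ex-m<n}. It is worth noting, finally, that the conclusion never uses $D \subseteq C$: the statement holds for any subspace $D$ of $L^n$, the only relevant object being the Galois closed space $D^*$ of dimension at most $m$.
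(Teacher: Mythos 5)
Your proof is correct and follows essentially the same route as the paper: apply the second Lemma to the Galois closed space $D^*$ (using $\dim D^* \le n \le m$) to get a witness $\bd$ with $D^* = \langle\bd\rangle^*$, then the first Lemma to conclude $\rk(M(\bd)) = \dim D^*$. You are in fact slightly more careful than the paper, which leaves the upper bound $\rk(M(\bd)) \le \dim D^*$ for arbitrary $\bd \in D^*$ implicit, and your closing observations (where $m \ge n$ enters, and that $D \subseteq C$ is never used) are accurate.
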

\begin{proof}
Because $D^*$ is Galois closed, there is a $\bd\in D^*$ such that $D^*=\langle\bd\rangle^*$. We also have $\dim(\langle\bd\rangle^*)=\rk(M(\bd))$, so $\rk(M(\bd))=\dim D^*$ and indeed
\[ \max_{\bd\in D^*} \rk(M(\bd))=\dim D^*. \]
\end{proof}

\section{Degenerate codes}

As a small application of the generalized rank weights, we discuss the concept of \emph{degenerate} codes. A code $C$ is called degenerate with respect to the Hamming metric in case there is a coordinate such that all codewords are zero at that position. This is equivalent with saying that the Hamming minimum distance of $C^\perp$ is one.

\begin{definition}
The code $C$ is called degenerate with respect to the rank metric if $d_R(C^\perp)=1$.
\end{definition}

The dual code here is defined as the orthogonal subspace of $C$ in $L^n$. We give some characteristics of (non)degenerate codes.

\begin{proposition}\label{p-deg}
The code $C$ is degenerate with respect to the rank metric if and only
$C$ is rank equivalent with a code such that all its codewords are zero at the last position.
\end{proposition}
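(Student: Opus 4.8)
The plan is to unwind the definition of degeneracy in terms of the rank minimum distance of the dual, and to exploit the characterization of rank isometries given in the earlier theorem, namely that $\mbox{Iso}(L^n) \cong L^*/K^* \times \mbox{GL}(n;K)$. The key observation is that $d_R(C^\perp)=1$ means there is a nonzero $\bx \in C^\perp$ with $\wt_R(\bx)=1$, i.e. with $\dim_K \Rsupp(\bx)=1$. Such an $\bx$ has the form $\bx = \lambda \bv$ for some $\lambda \in L^*$ and some nonzero $\bv \in K^n$: indeed, $\Rsupp(\bx)$ being one-dimensional over $K$ means all rows of $M_B(\bx)$ are $K$-multiples of a single vector $\bv \in K^n$, so every column $x_j$ is a fixed $L$-element $\lambda$ times the corresponding entry of $\bv$, whence $\bx = \lambda\bv$. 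By Proposition \ref{p-Rsupp}(1) we may as well take $\lambda = 1$ and assume $\bx = \bv \in K^n$.

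First I would prove the forward direction. Given $\bv \in C^\perp \cap K^n$ nonzero, I want to produce a rank isometry that moves $C$ to a code whose codewords all vanish in the last coordinate. Since $\bv \neq 0$ lies in $K^n$, there exists $g \in \mbox{GL}(n;K)$ whose action sends $\bv$ to the last standard basis vector $\be_n$ (extend $\bv$ to a $K$-basis of $K^n$). Because $g$ acts $K$-linearly and the inner product $\bc \cdot \by$ is preserved by transporting both sides through $g$, the condition $\bc \cdot \bv = 0$ for all $\bc \in C$ becomes $\bc' \cdot \be_n = 0$ for all $\bc'$ in the transported code $gC$; that is, the last coordinate of every codeword of $gC$ is zero. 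Since $g \in \mbox{GL}(n;K) \subseteq \mbox{Iso}(L^n)$, the code $gC$ is rank equivalent to $C$, which is exactly the desired conclusion.

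Conversely, suppose $C$ is rank equivalent to a code $C'$ all of whose codewords vanish at the last position. Then $C' \subseteq \langle \be_1,\ldots,\be_{n-1}\rangle_L$, so $\be_n \in (C')^\perp$ and $\wt_R(\be_n) = 1$, giving $d_R((C')^\perp) = 1$. It then remains to check that degeneracy is invariant under rank isometry: if $C' = \phi(C)$ for $\phi \in \mbox{Iso}(L^n)$, then $\phi$ induces a corresponding isometry on the dual (scalar multiplication and $\mbox{GL}(n;K)$ both carry $C^\perp$ to $(C')^\perp$ up to a rank-preserving map, using that the standard bilinear form transforms predictably under both generators), so $d_R(C^\perp) = d_R((C')^\perp) = 1$.

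The main obstacle I anticipate is the bookkeeping in the converse: one must verify carefully that the two generating types of isometries, scalar multiplication by $\lambda \in L^*$ and the $\mbox{GL}(n;K)$-action, each send $C^\perp$ to the dual of the image and preserve rank weight, so that the value $d_R(C^\perp)$ is genuinely an isometry invariant. For scalar multiplication this is immediate from Proposition \ref{p-Rsupp}(1), but for $g \in \mbox{GL}(n;K)$ one should pin down that the dual is transported by $(g^{-1})^{\top}$ (or an analogous $K$-linear map), which is again rank-preserving since it lies in $\mbox{GL}(n;K)$. Once this invariance is established, both implications reduce to the single concrete observation above about weight-one vectors in $K^n$.
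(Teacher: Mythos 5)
Your proposal is correct and takes essentially the same route as the paper: extract a rank-weight-one vector from $C^\perp$, reduce it to a vector $\bv \in K^n$ (the paper does this via a change of the basis $B$ of $L/K$; you via the rank-one factorization $\bx = \lambda\bv$ plus $L$-linearity of $C^\perp$), and then apply a $\mbox{GL}(n;K)$ coordinate change moving $\bv$ to $\be_n$. The only imprecise step --- the claim that the inner product is preserved by transporting \emph{both} sides through the same $g$, when in fact the dual transports by the contragredient $(g^{-1})^{\top}$ --- is one you flag and fix correctly in your final paragraph, and your explicit treatment of the converse is actually more careful than the paper's, which settles it with ``clearly the converse also holds.''
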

\begin{proof}
Suppose that $d_R(C^\perp)=1$.
Then there is a $\bh \in C^\perp$ such that  $M_B(\bh )$ has rank $1$.
After dividing by the first element of the basis $B$ we may assume that
the first element of $B$ is equal to $1$.
After a change of the basis $B$ we may assume that
only the first row of $M_B(\bh )$ has nonzero entries.
Hence $\bh \in K^n$, since the first element of $B$ is equal to $1$.
We can take this $\bh $ as the first row of the parity check matrix of $C$.
After a coordinate change of $L^n$ with entries in $K$, that is a rank isometry, we may assume
that we get a rank equivalent code with $\bh =(0,\ldots,0,1)$.
Therefore $C$ is rank equivalent with a code such that all its codewords are zero at the
last position. \\
Clearly the converse also holds.
\end{proof}

\begin{corollary}\label{cor1-deg}
Let $C$ be an $L$-linear code of length $n$ and dimension $k$.
Then $C$ is nondegenerate with respect to the rank metric if and only if $\Rsupp(C)=K^n$ if and only if $d_{R,k}(C)=n$.
\end{corollary}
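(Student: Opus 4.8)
The plan is to prove the two equivalences separately, treating the equivalence of the last two conditions as essentially a triviality and concentrating the real work on the link with nondegeneracy. First I would observe that because $\dim_L(C)=k$, the only $L$-linear subcode of $C$ of dimension $k$ is $C$ itself; hence by Definition \ref{d-grwJP} we have $d_{R,k}(C)=\wt_R(C)=\dim_K\Rsupp(C)$. Since $\Rsupp(C)$ is by construction a $K$-linear subspace of $K^n$, its dimension is at most $n$, with equality precisely when $\Rsupp(C)=K^n$. This already settles $d_{R,k}(C)=n \iff \Rsupp(C)=K^n$ and leaves only the equivalence of these with nondegeneracy.

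For that equivalence I would argue the contrapositive, namely that $C$ is degenerate (that is $d_R(C^\perp)=1$) if and only if $\Rsupp(C)\neq K^n$, and the key tool is the characterization recorded in the Remark following Corollary \ref{cJ2}: writing $J=\Rsupp(C)$, one has $C=C(I)$ if and only if $I\subseteq J^\perp$, equivalently $\Rsupp(C)\subseteq I^\perp$. For the direction $\Rsupp(C)\neq K^n \Rightarrow$ degenerate, I would note that a proper $K$-subspace of $K^n$ lies in some hyperplane $H$, and every hyperplane has the form $H=I^\perp$ for a one-dimensional $I=H^\perp\subseteq K^n$. Thus $\Rsupp(C)\subseteq I^\perp$, so $C=C(I)$, and by Lemma \ref{lJ2} every $\bc\in C$ satisfies $\bc\cdot\by=0$ for all $\by\in I$. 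Choosing a nonzero $\bv\in I\subseteq K^n$ gives $\bv\in C^\perp$; since $\bv$ is a nonzero vector with entries in $K$, the matrix $M_B(\bv)$ (taking $B$ with first element $1$) has a single nonzero row, so $\wt_R(\bv)=1$ and $d_R(C^\perp)=1$.

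For the converse direction, if $d_R(C^\perp)=1$ then $C^\perp$ contains a word $\bh$ with $\wt_R(\bh)=1$, and by exactly the basis-change argument used in the proof of Proposition \ref{p-deg} I may assume $\bh\in K^n\setminus\{\mathbf 0\}$; setting $I=\langle\bh\rangle_K$, Lemma \ref{lJ2} gives $C=C(I)$, whence $\Rsupp(C)\subseteq I^\perp\subsetneq K^n$. Assembling the two implications yields $C$ nondegenerate $\iff \Rsupp(C)=K^n \iff d_{R,k}(C)=n$. The only delicate point, and the step I expect to require the most care, is the passage between a rank-one codeword of the dual and a genuine vector of $K^n$ together with the correct duality bookkeeping (one-dimensional subspaces of $K^n$ versus hyperplanes $I^\perp$); both are already handled by Proposition \ref{p-deg} and by standard duality in $K^n$, so no new difficulty should arise once these are invoked.
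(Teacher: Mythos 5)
Your proposal is correct, and it proves the statement by a genuinely different route than the paper. The equivalence $d_{R,k}(C)=n \iff \Rsupp(C)=K^n$ you handle exactly as the paper does (the only $k$-dimensional subcode is $C$ itself, so $d_{R,k}(C)=\dim_K\Rsupp(C)$). But for the equivalence with nondegeneracy, the paper's proof is a two-line reduction to Proposition \ref{p-deg}: if $\Rsupp(C)\neq K^n$, apply a coordinate transformation in $\mbox{GL}(n;K)$ (a rank isometry) so that $\Rsupp(C)\subseteq K^{n-1}\times\{0\}$, whence all codewords of the transformed code vanish in the last position, and conversely. You instead bypass rank equivalence almost entirely and work inside the duality framework of Section 3: you use the Remark following Corollary \ref{cJ2} ($C=C(I) \iff I\subseteq \Rsupp(C)^\perp$) together with Lemma \ref{lJ2} to translate ``$\Rsupp(C)$ lies in a hyperplane $I^\perp$'' directly into ``$C^\perp$ contains a nonzero vector of $K^n$, hence a rank-one word,'' and in the converse direction you borrow from the proof of Proposition \ref{p-deg} only the scaling/basis-change step that turns a rank-one dual word into an element of $K^n$. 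Both arguments are valid over an arbitrary base field $K$; the paper's is shorter because Proposition \ref{p-deg} is already available, while yours has the merit of making the mechanism explicit --- degeneracy is precisely the existence of a one-dimensional $I\subseteq K^n$ with $C=C(I)$ --- and of never invoking the isometry group or the notion of rank equivalence, so it stands on Lemma \ref{lJ2} alone.
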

\begin{proof}
Suppose that $\Rsupp(C)\not=K^n$. Then there is a coordinate transformation with entries in $K$ such
that $\Rsupp(C) \subseteq K^{n-1}\times\{ 0 \}$. Hence $C$ is degenerate with respect to the rank metric
by Proposition \ref{p-deg}. Clearly the converse also holds.\\
The last equivalence follows from the definition of $d_{R,k}(C)$.
\end{proof}

An alternative proof of this theorem uses the following result on duality by Ducoat \cite[Theorem I.3]{ducoat:2013}:

\begin{theorem}
Let $C$ be an $L$-linear code with dual $C^\perp$. Then the generalized rank weights of $C$ and $C^\perp$ are related as follows:
\[ \{d_{r,R}(C)\ |\ 1\leq r\leq k\} = \{1,\ldots,n\}\setminus\{n+1-d_{r,R}(C^\perp)\ |\ 1\leq r\leq n-k\}. \]
\end{theorem}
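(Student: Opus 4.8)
The plan is to prove this as the rank-metric analogue of Wei's duality theorem, working throughout with the Galois-closed reformulation $d_{R,r}(C)=\min\{\dim V : V=V^*,\ \dim(C\cap V)\ge r\}$ established earlier in the excerpt (the equivalence of the Kurihara-Matsumoto-Uyematsu definition with Definition \ref{d-grwJP}). The central device is the monotone ``rank function''
$$\rho_C(v)=\max\{\dim(C\cap V)\ :\ V=V^*,\ \dim V=v\},\qquad v=0,\dots,n,$$
which is well defined because Galois closed subspaces of every dimension exist (they are exactly the extensions $U\otimes L$ of $K$-subspaces $U\subseteq K^n$, by the characterization theorem above). First I would record that $\rho_C$ is nondecreasing with $\rho_C(0)=0$, $\rho_C(n)=k$, and that $d_{R,r}(C)=\min\{v:\rho_C(v)\ge r\}$, so that the theorem becomes a statement about the jump set of $\rho_C$.

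Next I would prove strict monotonicity $d_{R,r}(C)<d_{R,r+1}(C)$. Given a Galois closed $V$ of dimension $d_{R,r+1}(C)$ with $\dim(C\cap V)\ge r+1$, write $V=U\otimes L$, pass to a hyperplane $U'\subset U$ of $K$-codimension $1$, and set $V'=U'\otimes L$; then $V'$ is Galois closed of dimension one less, and since $V'$ has codimension $1$ in $V$ we get $\dim(C\cap V')\ge\dim(C\cap V)-1\ge r$. This forces $d_{R,r}(C)\le d_{R,r+1}(C)-1$. Consequently $\rho_C(v)-\rho_C(v-1)\in\{0,1\}$, the jumps occur exactly at the points of $J_C:=\{d_{R,r}(C):1\le r\le k\}$, and $J_C$ consists of $k$ distinct integers in $\{1,\dots,n\}$. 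Applying the same argument to $C^\perp$ yields an $(n-k)$-element jump set $J_{C^\perp}$.

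I would then connect the two rank functions by duality on Galois closed subspaces. The key observations are that $V^\perp$ is again Galois closed (for $\sigma\in\mathrm{Gal}(L/K)$ and $\by\in V^\perp$ one has $\sigma(\by)\cdot\bv=\sigma(\by\cdot\sigma^{-1}(\bv))=0$, since $\sigma^{-1}(\bv)\in V$), together with the standard identity $\dim(C^\perp\cap V^\perp)=n-\dim(C+V)=\dim(C\cap V)-\dim V+(n-k)$. Since $V\mapsto V^\perp$ is a dimension-reversing bijection on Galois closed subspaces, maximizing over $V$ of dimension $v=n-w$ gives $\rho_{C^\perp}(w)=\rho_C(n-w)+w-k$. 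Differencing this relation yields $\rho_{C^\perp}(j)-\rho_{C^\perp}(j-1)=1-\bigl(\rho_C(n-j+1)-\rho_C(n-j)\bigr)$, and because each difference lies in $\{0,1\}$ this says precisely $j\in J_{C^\perp}\iff n+1-j\notin J_C$.

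Finally I would assemble the partition: for $m\in\{1,\dots,n\}$ one has $m\in\{\,n+1-d:d\in J_{C^\perp}\}$ iff $n+1-m\in J_{C^\perp}$ iff $m\notin J_C$, which is exactly the claimed identity $J_C=\{1,\dots,n\}\setminus\{n+1-d_{R,r}(C^\perp):1\le r\le n-k\}$. I expect the main obstacle to be the two structural facts that drive everything --- the strict monotonicity (equivalently, that $\rho_C$ rises in unit steps), which rests on the existence of Galois closed hyperplanes inside a Galois closed subspace, and the stability of Galois closedness under $\perp$; once these are in place, the dimension count and the final bookkeeping are routine.
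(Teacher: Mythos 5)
Your proof is correct, but there is nothing in the paper to compare it against step by step: the paper does not prove this duality theorem at all. It quotes it as a known result of Ducoat \cite[Theorem I.3]{ducoat:2013} and only uses it to give an alternative proof of Corollary \ref{cor1-deg}. What you supply is therefore genuinely new relative to the text: a self-contained rank-metric analogue of Wei's duality argument, built only on ingredients the paper does establish, namely the reformulation $d_{R,r}(C)=\min\{\dim V : V=V^*,\ \dim(C\cap V)\geq r\}$ (the equivalence with the Kurihara-Matsumoto-Uyematsu definition) and the characterization of Galois closed subspaces as extensions $U\otimes L$ of $K$-subspaces $U\subseteq K^n$. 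I checked the individual steps and they hold: your jump function $\rho_C$ is well defined and rises in unit steps, because for a Galois closed $V=U\otimes L$ and a $K$-hyperplane $U'\subset U$ the space $V'=U'\otimes L$ is again Galois closed with $\dim(C\cap V')\geq\dim(C\cap V)-1$; the orthogonal $V^\perp$ of a Galois closed space is Galois closed (your computation $\sigma(\by)\cdot\bv=\sigma\bigl(\by\cdot\sigma^{-1}(\bv)\bigr)$ is valid); the map $V\mapsto V^\perp$ is an involution even on isotropic subspaces, since non-degeneracy of the standard form on $L^n$ gives $\dim V^\perp=n-\dim V$ and hence $(V^\perp)^\perp=V$; and the identity $\dim(C^\perp\cap V^\perp)=\dim(C\cap V)-\dim V+(n-k)$ yields $\rho_{C^\perp}(w)=\rho_C(n-w)+w-k$, from which the complementarity of the jump sets follows by your differencing argument. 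One point worth making explicit if you write this up: your argument uses nothing about finite fields, cyclicity of the Galois group, or the hypothesis $m\geq n$, so it proves the duality for any finite Galois extension $L/K$ within the paper's framework, which is more general than the finite-field setting of the result the paper cites; this also makes the paper's ``alternative proof'' of Corollary \ref{cor1-deg} self-contained rather than dependent on an external reference.
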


\begin{proof}[Alternative proof of Corollary \ref{cor1-deg}]
If $d$ is a generalized rank weight of $C$, then $n+1-d$ is not a generalized rak weight of $C^\perp$. So if $d_{R,1}(C^\perp)=1$, then $n+1-1=n$ can not be a generalized rank weight of $C$, so $d_{R,k}(C)<n$.
\end{proof}

\begin{corollary}\label{cor2-deg}
Let $L$ be an extension of $K$ of degree $m$.
Let $C$ be an $L$-linear code of length $n$ and dimension $k$ that is nondegenerate with respect to the rank metric.
Then $km \geq n$.
\end{corollary}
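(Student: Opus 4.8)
The plan is to reduce the statement to a single dimension count on the rank support of $C$, chaining together the characterization of nondegeneracy with the generation property of rank supports. First I would invoke Corollary \ref{cor1-deg}: since $C$ is nondegenerate with respect to the rank metric, we have $\Rsupp(C) = K^n$, so in particular $\dim_K \Rsupp(C) = n$. This converts the hypothesis into a concrete equality about a $K$-subspace of $K^n$.

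Next I would express $\Rsupp(C)$ via a generating set. Choose $\bg_1, \ldots, \bg_k$ generating $C$ as an $L$-linear space (possible since $\dim_L C = k$). By Proposition \ref{p-Rsupp}$(4)$, $\Rsupp(C)$ is the $K$-linear sum $\Rsupp(\bg_1) + \cdots + \Rsupp(\bg_k)$. Since the dimension of a sum of subspaces is at most the sum of the dimensions, this gives $n = \dim_K \Rsupp(C) \leq \sum_{i=1}^{k} \dim_K \Rsupp(\bg_i)$.

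Finally I would bound each summand. Each $\Rsupp(\bg_i)$ is by definition the $K$-linear row space of the matrix $M_B(\bg_i)$, which has exactly $m$ rows; hence $\dim_K \Rsupp(\bg_i) = \wt_R(\bg_i) \leq m$ for every $i$. Combining the two displays yields $n \leq \sum_{i=1}^{k} m = km$, which is exactly the asserted inequality $km \geq n$.

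The argument is essentially a composition of results already proved, so I do not anticipate a genuine obstacle. The only step requiring care is the bound $\dim_K \Rsupp(\bg_i) \leq m$: it does not come from the code structure but simply from the shape of the associated matrix, whose number of rows is the extension degree $m$. This is precisely the point where $m$ enters the estimate, and it is worth emphasizing that no auxiliary assumption such as $m \geq n$ is required for this corollary.
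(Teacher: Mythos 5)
Your proof is correct and follows essentially the same route as the paper's: both invoke Proposition \ref{p-Rsupp}$(4)$ to write $\Rsupp(C)$ as the $K$-linear sum of the $\Rsupp(\bg_i)$, bound each $\dim_K \Rsupp(\bg_i)$ by $m$ via the row count of $M_B(\bg_i)$, and apply Corollary \ref{cor1-deg} to identify nondegeneracy with $\Rsupp(C)=K^n$. The only difference is cosmetic --- you apply the nondegeneracy hypothesis at the start rather than at the end.
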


\begin{proof}
Let $\bg_1, \ldots ,\bg_k$ be a basis of $C$. Then $\Rsupp(C)$ is generated by the spaces
$\Rsupp(\bg_i)$ for $i=1, \ldots ,k$  by Proposition \ref{p-Rsupp}. The dimension of $\Rsupp(\bg_i)$ is at most $m$.
Hence the dimension of $\Rsupp(C)$ is at most $km$ by Proposition \ref{p-Rsupp}.
If $C$ is nondegenerate, then $\Rsupp(C)=K^n$ by Corollary \ref{cor1-deg}. Therefore $km \geq n$.
\end{proof}

\section{Conclusion}

This paper introduced a new definition for the generalized rank weights. This definition is induced by the study of the generalized rank weight enumerator \cite{jurrius:2014}. We investigated the relation between the rank support and the trace code. The fact that the proposed definition of the generalized rank weight enumerator induces a definition of the generalized rank weights that is equivalent to a known definition, supports the definition of the generalized rank weight enumerator. \\

We showed that our definition is equivalent to that of Kurihara-Matsumoto-Uyematsu \cite{kurihara:2012,kurihara:2013}, even if we extend this definition to codes over arbitrary  fields. We also show that our definition of the generalized rank weight is equivalent to the definition of Oggier-Sboui \cite{oggier:2012}: here we have to assume $m\geq n$ and that the Galois group of $L/K$ is cyclic. \\
The fact that the proposed definition of the generalized rank weight enumerator induces a definition of the generalized rank weights that is equivalent to a known definition, supports the definition of the generalized rank weight enumerator. \\
We believe that the fact that we were able to prove equivalence to the definition of Kurihara-Matsumoto-Uyematsu but only partial equivalence to the definition of Oggier-Sboui, supports the definition of the generalized rank weights of Kurihara-Matsumoto-Uyematsu in favor of the definition of Oggier-Sboui. A similar conclusion can be drawn from the work of Ducoat \cite{ducoat:2013}, who proves duality relations for the generalized rank weights. \\

As an application of the notion of generalized rank weights, we discussed codes that are degenerate with respect to the rank metric. \\

As further work, we propose to study our definition for Delsarte codes
\cite{delsarte:1978,ravagnani:2014}: these are rank metric codes that are linear subspaces of $K^{m\times n}$ that do not necessarily come from a code over $L$. Our definition, contrary to that of Kurihara-Matsumoto-Uyematsu, is directly applicable to such codes.

\section{Acknowledgement}
The first author is partially supported by grant GOA62 of the Vrije Universiteit Brussel, Belgium.
The second author is indebted to the Aalborg University for a stay during which this research could be finished and
that was supported by the Danish Council for Independent Research (Grant No. DFF 4002-00367).

\bibliographystyle{plain}
\bibliography{rkwtenum}

\end{document}